%% file: main.tex
\documentclass[10pt]{article}
\usepackage{fullpage}
\usepackage{amsmath,amsfonts,amsthm,amssymb,dsfont,bbm,bm}
\usepackage{url}
\usepackage{graphicx}
\usepackage{wrapfig}
\usepackage{caption} 
\usepackage{subcaption}
\usepackage{algorithm}
\usepackage[noend]{algpseudocode}
\usepackage{float}
\usepackage{framed}
\usepackage{comment}
\usepackage{enumerate}
\usepackage{xcolor}\usepackage[colorlinks=true, linkcolor=red, urlcolor=blue, citecolor=blue]{hyperref}
\usepackage[makeroom]{cancel}
\usepackage[shortlabels,inline]{enumitem}
\usepackage{multirow}
\usepackage{booktabs} 
\usepackage{array}
\usepackage{braket}
\usepackage{authblk}  
\usepackage{mathrsfs}
\usepackage{circuitikz}
\usepackage{tikz}
\newcounter{tableeqn}[table]

\DeclareMathOperator{\Tr}{Tr}
\DeclareMathOperator{\supp}{supp}
\DeclareMathOperator*{\argmin}{argmin}
\DeclareMathOperator*{\diag}{diag}
\newtheorem{claim}{Claim}
\newtheorem{corollary}{Corollary}
\newtheorem{fact}{Fact}
\newtheorem{lemma}{Lemma}

\theoremstyle{definition}
\newtheorem{definition}{Definition}
\newtheorem{problem}{Problem}

\newenvironment{proofsketch}{%
\proof}{\endproof}

\newcommand{\sizeof}[1]{\left\lvert{#1}\right\rvert}

\newcommand*\circled[1]{\tikz[baseline=(char.base)]{\node[shape=circle,fill,inner sep=2pt] (char) {\textcolor{white}{#1}};}}

\title{Quantum-enhanced Network Tomography}

\author[1]{Yufei Zheng\thanks{\texttt{\{yufeizheng, towsley\}@cs.umass.edu}}}
\author[2]{Zihao Gong\thanks{\texttt{\{zgong12, saikat\}@umd.edu}}}
\author[2]{Saikat Guha}
\author[1]{Don Towsley}

\affil[1]{\small\textit{College of Information and Computer Sciences, University of Massachusetts Amherst}}
\affil[2]{\small\textit{Department of Electrical and Computer Engineering, University of Maryland, College Park}}

\date{}  

\begin{document}
\maketitle
\begin{abstract}
\emph{Network tomography} refers to the use of inference techniques for inferring internal network states from end-to-end probes.
\emph{Quantum probes}, implemented by sending blocks of $n$ coherent-state pulses augmented with continuous-variable (CV) squeezing ($n=1$) or weak temporal-mode entanglement ($n>1$) over a lossy channel to a receiver with homodyne detection capabilities, are known to carry information about the channel transmissivity.
Assuming a subset of nodes in an optical network is capable of sending and receiving such probes through intermediate nodes with all-optical switching capabilities, we leverage these quantum probes to estimate link transmissivities.

To determine how to route the probes in a network, we propose a probe construction algorithm that guarantees link \emph{identifiability}, while maximizing the number of \emph{information orthogonal} sets of transmissivities.
A set of probes induces a \emph{Fisher information matrix (FIM)}.
We then derive two metrics, the determinant of the FIM and the trace of its inverse, to evaluate the performance of the probes.
In particular, our results can be used to characterize the quantum improvement in estimating link transmissivities in a general optical network.
\end{abstract}

\input{intro}
\input{prelim}
\input{physical}
\input{algo}
\input{general}
\input{related}
\input{conclusion}

\bibliographystyle{plain}
\newpage
\bibliography{ref}

\newpage
\appendix
\input{appendix}

\end{document}

%% file: intro.tex
\section{Introduction}

Network tomography~\cite{vardi1996network,he2021network} refers to the use of inference techniques that reconstruct the \emph{internal} states of a network using \emph{external} measurements, taken between a selected subset of network nodes.
In stark contrast to conventional approaches that require link- or device-level access to every internal state of interest (the Simple Network Management Protocol (SNMP)~\cite{case1989simple}), network tomography obviates this need by solely leveraging end-to-end measurements, making it particularly suitable for monitoring networks with inaccessible internal nodes (optical networks~\cite{harvey2007non}, cloud networks~\cite{chard2016network}). 
When measurements are modeled as random variables, network tomography bears some resemblance to multiparameter estimation. 
While certain statistical techniques transfer, the key difference lies in that network tomography problems are fundamentally graph-constrained: the underlying network topology must be accounted for when designing measurements.

A large body of existing work in classical network tomography has focused on upper-layer performance metrics such as packet loss rate~\cite{he2015fisher,ghita2010netscope,bu2002network,tsang2001passive}, delay~\cite{he2015fisher,duffield2004network,tsang2003network,coates2001network}, and bandwidth~\cite{dichev2012efficient,bobelin2008algorithms}.
Quantum network tomography, a newly emerged field of study, applies the classical tomography framework to quantum networks.
By treating a quantum network as a collection of noisy quantum channels, recent work has investigated how to determine the corresponding channel parameters~\cite{wang2025quantum,de2024quantum}.

In this work, we consider the problem of estimating link transmissivity, a physical-layer metric in optical networks.
This problem may arise in several contexts, as optical networks are not only highly prevalent in modern communication systems~\cite{korsback2023growing,velush2023boosting,poutievski2022jupiter}, but also strong candidates for the physical infrastructure of future quantum networks~\cite{nellis2026quantum,lukens2025hybrid}.
Meanwhile, link transmissivity is central to optical power budgeting~\cite{liang2022link} and routing decisions~\cite{gao2020resource}, while also serving as a critical indicator for performance monitoring and fault detection in optical networks~\cite{zheng2025quantum}.
To estimate the transmissivities of all links, network tomography is a natural approach. 
Even with full control over all the links, exhaustively testing each one does not scale to large networks. 
Instead, we rely on end-to-end \emph{probes} to infer the transmissivities.

\emph{Quantum probes}, implemented by sending blocks of $n$ coherent-state pulses augmented with continuous-variable (CV) squeezing or weak temporal-mode entanglement over a pure-loss bosonic channel to a receiver with homodyne detection capabilities, are known to be more sensitive than their quasi-classical counterparts in detecting sudden changes in transmission loss~\cite{guha2025quantum,zheng2025quantum}.
These probes, when traversing a set of links in the network, carry information about link transmissivities. 
Therefore, we also adopt such probes in solving our estimation problem.

Although transceivers for quantum probes are experimentally accessible using current quantum optical technology, it is only reasonable to assume that a \emph{subset} of network nodes will be equipped with such capabilities.
Given a network topology and a subset of network nodes that can send and receive quantum probes, to infer transmissivities for all links, we must decide how to route the probes in the network, and what physical implementation to use for each probe.
Solving the latter involves optimizing parameters for the physical implementation and experiment design. 
While it is an interesting problem in its own right, it lies beyond the scope of the present work.
Instead, we focus on routing, and on characterizing how well a given set of probes with specified physical implementations performs the estimation task.

In deciding how to route the probes in a network, we incorporate the notion of \emph{information orthogonality} from estimation theory~\cite{cox1987parameter}.
Informally, and particularly in the context of network tomography, two sets of unknown parameters that are information orthogonal can be estimated separately.
Estimating unknown link transmissivities in practice for a general network typically involves the use of numerical solvers.
Subsets of link transmissivities that are information orthogonal can be treated independently and in parallel using numerical tools, which makes solving large instances more tractable.
Therefore, we introduce a probe construction algorithm that maximizes the number of information orthogonal subsets of link transmissivities.

To quantify how much information a given set of probes $\mathscr{P}$ carries about the unknown link transmissivities, we consider the \emph{Fisher information matrix (FIM)} induced by $\mathscr{P}$.
Since the observation from each probe can be modeled as a Gaussian random vector~\cite{guha2025quantum}, the FIM admits a convenient structure that enables closed-form expressions for both its determinant and the trace of its inverse, two standard metrics from the theory of optimal experiment design~\cite{atkinson2007optimum} for evaluating the performance of $\mathscr{P}$.
Moreover, we show that, in a general network setting, both metrics are closely related to the Fisher information (FI) associated each individual probe, which allows us to extend results from the single-channel case to characterize the quantum improvement for the general network case.

Our contributions in this paper are twofold:
\begin{itemize}
\item We present a probe construction algorithm that guarantees the identifiability of all link transmissivities, while maximizing the number of information orthogonal subsets of these unknown parameters (\S~\ref{sec:graph}).
\item We derive the determinant of the FIM induced by a given set of probes, and the trace of its inverse, and use both metrics to characterize the quantum improvement in a general network (\S~\ref{sec:combined}).
\end{itemize}
In \S~\ref{sec:prelim}, we formally define what constitutes a probe, and provide the necessary background on the two metrics related to the FIM.
\S~\ref{sec:physical} reviews the quantum and classical probes used throughout this work, and discusses the potential benefits---or lack thereof---of sharing entanglement across multiple channels in the network.
Finally, we discuss related work in \S~\ref{sec:related}, and conclude the paper in \S~\ref{sec:conclusion}.

%% file: prelim.tex
\section{Preliminaries} \label{sec:prelim}

A network is a graph $G = (V(G),E(G))$, with a set of \emph{monitors} $M \subseteq V(G)$ that can \emph{send} and \emph{receive} probes.
Here each edge (link) $e_i$ in $E(G)$ represents a channel with transmissivity $\eta_i$.
Throughout this work, we assume the graph topology is \emph{known} (i.e., $G$ and $M$), but the transmissivity of each link is \emph{unknown}, and the problem is to estimate the vector of all link transmissivities $\bm{\eta}=(\eta_1, \eta_2, \dots, \eta_n)$, through probing the network.
Next we formally define what a probe is (\S~\ref{sec:prelim:probe}), and introduce the two metrics  for characterizing the performance of the probes (\S~\ref{sec:prelim:metrics}).

\subsection{What is a probe?} \label{sec:prelim:probe}

\paragraph{Definition.}
Formally, a probe is a $4$-tuple $(P, \texttt{impl}(P), t(P), c(P))$, where $P$ describes how the probe is routed in the network, $\texttt{impl}(P) \in \{ `\verb|coherent|\textrm', `\verb|squeezed|\textrm', `\verb|entangled|\textrm'\}$ specifies the physical implementation\footnote{This is an oversimplification for notational convenience. $\texttt{impl}(P)$ should also include the classical coherent energy $N$ and the quantum energy $N_a$ used in the physical implementation. In what follows, assume these parameters are explicitly specified in $\texttt{impl}(P)$.}, $t(P)$ is the number of pulses in the entangled block, and $c(P)$ is the number of copies to send.
For example, we may send $3$ copies of a probe that follows the route $P$, and each copy consists of a block of $4$ entangled pulses.
In this example, $\texttt{impl}(P) = `\verb|entangled|\textrm'$, $t(P) = 4$, and $c(P) = 3$.
If a probe that traverses some $P'$ is implemented by coherent of squeezed states, $t(P')$ is trivially set to $1$.
Here $P$, the graph-theoretic description of a probe in the $4$-tuple representation, is formally a \emph{walk}\footnote{A \emph{walk} with endpoints $v_0$ and $v_k$ is a sequence $v_0, e_1, v_1, e_2, ..., e_k, v_k$, where $v_i \in V$ for all $0\leq i\leq k$, and $e_j \in E$ for all $1\leq i\leq k$.} in the graph with endpoints in the set of monitors $M$.
However, it is more convenient to think of $P$ as the \emph{multiset} of edges that the probe traverses.
Whenever we refer to a set of probes $\mathscr{P}$, we mean that $\mathscr{P} = \{(P_i, \texttt{impl}(P_i), t(P_i), c(P_i))\}_{i=1}^n$, where $P_1, P_2, \dots, P_n$ are distinct.

For ease of notation, the term ``probe'' refers to different components in the formal $4$-tuple definition depending on the context.
In \S~\ref{sec:physical}, where the underlying graph is trivially an edge or a star, we treat a probe as $(\texttt{impl}(P), t(P), c(P))$.
In \S~\ref{sec:graph}, which is concerned solely with routing, a probe refers to the graph-theoretic object $P$.
Only in \S~\ref{sec:combined}, when all elements from previous sections are combined, do we require the full $4$-tuple definition of a probe.

\paragraph{Observations from probes.}
The three types of probes we consider in this work are implemented by coherent states, squeezed states, and entanglement augmented states. 
At the homodyne detector placed at the receiving end of these probes, each probe $(P, \texttt{impl}(P), t(P), c(P))$ produces a random vector $\bm{X}_P$ of length $t(P)c(P)$ sampled from some Gaussian distribution $\mathcal{N}(\bm{\mu}(\eta_P), \bm{\Sigma}(\eta_P))$, where\footnote{In this notation, the multiset $P$ is represented as sets, with some elements repeated. And we iterate over all edges $e$ in $P$, including repetitions.} $\eta_P = \prod_{e \in P} \eta_e$.
We refer to $\bm{X}_P$ as the \emph{observation} from the probe $(P, \texttt{impl}(P), t(P), c(P))$.
The observation from a set of probes $\mathscr{P} = \{(P_i, \texttt{impl}(P_i), t(P_i), c(P_i))\}_{i=1}^n$ is a random vector $\bm{X}$, given by the concatenation of $\bm{X}_{P_1}, \bm{X}_{P_2}, \dots, \bm{X}_{P_n}$, and sampled from some $d$-variate Gaussian distribution $\mathcal{N}_d(\bm{\mu}(\bm{\eta}), \bm{\Sigma}(\bm{\eta}))$, with mean vector $\bm{\mu}(\bm{\eta})$, covariance matrix $\bm{\Sigma}(\bm{\eta})$, and dimension $d = \sum_{i=1}^n t(P_i) c(P_i)$.
We detail the forms of $\bm{\mu}(\cdot)$ and $\bm{\Sigma}(\cdot)$ for different physical implementations in \S~\ref{sec:physical}.

\subsection{Two metrics related to the Fisher information matrix} \label{sec:prelim:metrics}

\paragraph{Fisher information matrix.}
In estimation theory, \emph{Fisher information matrix (FIM)} quantifies how much information the observations from the probes contain about the vector of unknown parameters $\bm{\eta}$.
Given probes $\mathscr{P}$, the vector of observations generated by $\mathscr{P}$ is sampled from a multivariate Gaussian distribution.
Therefore, we restrict our attention to the FIM of Gaussian distributions.
For a $d$-dimensional Gaussian distribution $\mathcal{N}_d(\bm{\mu}(\bm{\eta}), \, \bm{\Sigma}(\bm{\eta}))$, the $(i,j)$-entry of the FIM $\mathcal{I}(\bm{\eta})$ is 
\begin{align} \label{eqn:FIM}
\mathcal{I}_{i,j}
= \frac{\partial \bm{\mu}^T}{\partial \eta_i} \bm{\Sigma}^{-1} \frac{\partial \bm{\mu}}{\partial \eta_j}  + \frac{1}{2}\Tr \left(\bm{\Sigma}^{-1} \frac{\partial \bm{\Sigma}}{\partial \eta_i} \bm{\Sigma}^{-1}\frac{\partial \bm{\Sigma}}{\partial \eta_j}\right),
\end{align}
where
\begin{align*}
\frac{\partial \bm{\mu}}{\partial \eta_i}
=
\begin{pmatrix}
\frac{\partial \mu_1}{\partial \eta_i} \\
\frac{\partial \mu_2}{\partial \eta_i} \\
\vdots \\
\frac{\partial \mu_n}{\partial \eta_i}
\end{pmatrix},
\frac{\partial \bm{\Sigma}}{\partial \eta_i}
=
\begin{pmatrix}
\frac{\partial \Sigma_{1,1}}{\partial \eta_i} & \frac{\partial \Sigma_{1,2}}{\partial \eta_i} & \cdots & \frac{\partial \Sigma_{1,n}}{\partial \eta_i} \\
\frac{\partial \Sigma_{2,1}}{\partial \eta_i} & \frac{\partial \Sigma_{2,2}}{\partial \eta_i} & \cdots & \frac{\partial \Sigma_{2,n}}{\partial \eta_i} \\
\vdots & \vdots & \ddots & \vdots \\
\frac{\partial \Sigma_{n,1}}{\partial \eta_i} & \frac{\partial \Sigma_{n,2}}{\partial \eta_i} & \cdots & \frac{\partial \Sigma_{n,n}}{\partial \eta_i} \\
\end{pmatrix}.
\end{align*}
Note that the size of the FIM $\mathcal{I}(\bm{\eta})$ depends on the number of unknown parameters $n$, \emph{not} the dimensionality $d$ of the Gaussian distribution.
In particular, when the number of edges (i.e., the number of unknown link transmissivities) $n=1$, we have a single channel with unknown transmissivity $\eta$, and \eqref{eqn:FIM} gives the Fisher information (FI) $I(\eta)$.

\paragraph{Metrics.}
For an unbiased estimator $\hat{\bm{\eta}}$ of $\bm{\eta}$ that achieves the Cramér-Rao bound, the volume of its error ellipsoid is proportional to $\frac{1}{\sqrt{\det(\mathcal{I}(\bm{\eta}))}}$, and the sum of the variances of the individual components of $\hat{\bm{\eta}}$ is equal to $\Tr(\mathcal{I}(\bm{\eta})^{-1})$.
Even if an efficient unbiased estimator does not exist, $\det(\mathcal{I}(\bm{\eta}))$ still characterizes the minimal achievable volume of the uncertainty ellipsoid, and $\Tr(\mathcal{I}(\bm{\eta})^{-1})$ represents a fundamental local information-theoretic benchmark for the minimum attainable total variance.
Therefore, we adopt $\det(\mathcal{I}(\bm{\eta}))$ and $\Tr(\mathcal{I}(\bm{\eta})^{-1})$ as measures for the performance a given set of probes $\mathscr{P}$.
Ideally, a good set of probes should induce a \emph{large} determinant and a \emph{small} trace.
In the special case of a single channel (\S~\ref{sec:physical:single}), $\det(\mathcal{I}(\bm{\eta})) = I(\eta)$, $\Tr(\mathcal{I}(\bm{\eta})^{-1}) = I(\eta)^{-1}$, and it suffices to consider the FI $I(\eta)$.

%% file: physical.tex
\section{Physical implementations of probes}
\label{sec:physical}

We begin this section by introducing the classical (coherent states) and quantum probes (squeezing and entanglement augmented states) considered throughout this work.
In \S~\ref{sec:physical:single}, we show that quantum probes are strictly better than classical ones for estimating the transmissivity of a single channel, and entanglement always brings benefits over squeezing alone.
With the goal of extending the channel case to a general network setting, we consider in \S~\ref{sec:physical:multi} the problem of estimating the transmissivities of multiple channels.
We demonstrate that, in contrast, sharing entanglement leaves both the determinant of the FIM and the trace of its inverse worse off than using squeezing alone.

\subsection{The single-channel case} \label{sec:physical:single}

Consider a single lossy channel between monitors $M_1$ and $M_2$, with unknown transmissivity $\eta$.
Fix the total number of probes $\sizeof{\mathcal{P}} = n$, where each probe is sent from $M_1$ to $M_2$.
At the receiving monitor, we observe an $n$-dimensional Gaussian random vector $\bm{X}\sim \mathcal{N}_n(\bm{\mu}(\eta), \bm{\Sigma}(\eta))$.
In what follows, we introduce specific forms of $\bm{\mu}(\cdot)$ and $\bm{\Sigma}(\cdot)$ for each physical implementation of $\mathcal{P}$, and compare the corresponding Fisher information $\mathcal{I}(\eta)$ between different implementations.

\paragraph{Coherent states.}
For the classical (or quasi-classical) benchmark, similar to~\cite{guha2025quantum,zheng2025quantum},
each probe is implemented as a coherent-state pulse $\ket{\alpha}$, with mean photon number $\alpha^2 = N + N_a$.
One can think of $N$ as the classical energy, while the quantum energy $0 < N_a << N$ is introduced to fairly compare the classical case with its quantum-augmented counterpart.
We assume all $n$ probes are independent.
Denote $\bm{I}$ as the $n \times n$ identity matrix, and $\bm{u}$ as the column vector of $n$ ones.
When we send $n$ coherent state pulses, the observation follows a Gaussian distribution with mean vector $\bm{\mu}_{\mathrm{c}}(\eta)$ and covariance matrix $\bm{\Sigma}_{\mathrm{c}}(\eta)$ specified below:
\begin{align}
& \bm{\mu}_{\mathrm{c}}(\eta) = \sqrt{(N+N_a)\eta} \cdot \bm{u}, \nonumber \\
& \bm{\Sigma}_{\mathrm{c}}(\eta) = \frac{1}{4} \bm{I}. \nonumber
\end{align}
Since $\bm{\Sigma}_{\mathrm{c}}(\eta)$ is constant with respect to $\eta$, the second term in~\eqref{eqn:FIM} is $0$, the Fisher information (FI) for the classical case, $I_{\mathrm{c}}(\eta)$, has the form
\begin{align}
& I_{\mathrm{c}}(\eta) = \frac{(N+N_a)n}{\eta}. \nonumber
\end{align}

\paragraph{Squeezed states.}
For the first quantum case, each probe is physically a displaced squeezed state $\ket{\alpha;r}$, with $\alpha^2 = N$ photon equivalent of classical coherent energy and $\sinh^2(r) = N_a << N$ quantum photon energy.
In both the classical and the squeezing cases, the $n$ pulses are uncorrelated, so again the covariance matrix for the observation is diagonal:
\begin{align}
& \bm{\mu}_{\mathrm{s}}(\eta) = \sqrt{N\eta} \cdot \bm{u}, \label{eqn:mu_sq} \\
& \bm{\Sigma}_{\mathrm{s}}(\eta) = \frac{1}{4}(1-(1-e^{-2r})\eta) \cdot \bm{I}. \label{eqn:Sigma_sq}
\end{align}
By~(\ref{eqn:FIM}), the FI for the squeezing case $I_{\mathrm{s}}(\eta)$ is 
\begin{equation} \label{eqn:I_s_with_r}
I_{\mathrm{s}}(\eta) = n \cdot \left( \frac{N}{\eta(1-(1-e^{-2r})\eta)} + \frac{(1-e^{-2r})^2}{2(1-(1-e^{-2r})\eta)^2} \right).
\end{equation}

\paragraph{Entanglement augmentation.}
Next, we consider entanglement-augmented probes.
Specifically, a block of $n \in \mathbb{Z}_+$ coherent-state pulses $\ket{\alpha}^{\otimes n}$ are augmented by a continuous variable entangled state generated by splitting a squeezed-vacuum state $\ket{0;s}$, $s \in \mathbb{R}_{+}$, where $N = \alpha^2$, and $\sinh^2(s) = n N_a$.
In this case, the $n$ probes are correlated, and for the observation we have
\begin{align}
& \bm{\mu}_{\mathrm{e}}(\eta) = \sqrt{N\eta} \cdot \bm{u}, \label{eqn:mu_ent} \\
& \bm{\Sigma}_{\mathrm{e}}(\eta) = \frac{1}{4} \bm{I} - \frac{\eta (1-e^{-2s})}{4n} \cdot \bm{u} \bm{u}^T. \label{eqn:Sigma_ent}
\end{align}
To compute the FI $\mathcal{I}(\eta)$, first note that $\bm{\Sigma}(\eta)$ is the sum of a diagonal matrix and a rank-$1$ matrix. 
This allows us obtain a closed-form for $\bm{\Sigma}(\eta)$ by applying the Sherman-Morrison formula~\cite{sherman1950adjustment}:
\begin{align}
\left(\bm{\Sigma}_{\mathrm{e}}(\eta)\right)^{-1}
= 4\bm{I} + \frac{4c_n \eta}{n(1-c_n \eta)} \bm{u} \bm{u}^T, \nonumber
\end{align}
where
\begin{equation} \label{eqn:c_n}
c_n = 1-e^{-2s} = \frac{2\sqrt{nN_a}}{\sqrt{nN_a+1}+\sqrt{nN_a}}.
\end{equation}
The notation $c_n$ will recur throughout this paper.
While all results can be given in terms of the original parameters of the probes (i.e., $N, N_a, n$), the $c_n$ notation shortens the expressions, which makes them more intuitive to reason about.
We will frequently make use of the following properties of $c_n$: 
\begin{enumerate*}[label=(\roman*)]
\item $c_n \in (0,1)$, and 
\item for any fixed $N_a$, $c_n$ increases in $n$ and approaches $1$ as $n \to \infty$.
\end{enumerate*}
Then~\eqref{eqn:FIM} gives the FI $I_{\mathrm{e}}(\eta)$ for the entanglement case 
\begin{align}
I_{\mathrm{e}}(\eta)
= \frac{nN}{\eta(1-c_n \eta)} + \frac{c_n^2}{2(1-c_n \eta)^2}. \label{eqn:I_e}
\end{align}

\paragraph{Quantum vs. classical.}
Throughout this paper, we refer to probes implemented by coherent-state pulses as \emph{classical} probes, and squeezing- and entanglement-augmented probes as \emph{quantum} probes.
To see how quantum probes fare relative to the classical ones, we compare their FIs, $I_{\mathrm{c}}(\eta)$, $I_{\mathrm{s}}(\eta)$ and $I_{\mathrm{e}}(\eta)$.
We first bring $I_{\mathrm{s}}(\eta)$ into the same parametrization as $I_{\mathrm{c}}(\eta)$ and $I_{\mathrm{e}}(\eta)$.
Note that by~\eqref{eqn:c_n}, $1-e^{-2r} = c_1$ and~\eqref{eqn:I_s_with_r} gives
\begin{equation} \label{eqn:I_s}
I_{\mathrm{s}}(\eta) = n \left( \frac{N}{\eta(1-c_1 \eta)} + \frac{c_1^2}{2(1-c_1 \eta)^2} \right).
\end{equation}
Next we show that, as long as the classical energy $N$ is large enough compared to the quantum energy $N_a$, entanglement augmentation is superior to squeezing (Fact~\ref{fact:single_ch_e_vs_s}), and quantum probes outperform their classical counterparts (Fact~\ref{fact:single_ch_q_vs_c}).
\begin{fact} \label{fact:single_ch_e_vs_s}
If $N > \frac{c_1^2}{2(c_n - c_1)}$, $I_{\mathrm{e}}(\eta) > I_{\mathrm{s}}(\eta)$.
\end{fact}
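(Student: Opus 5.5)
We compare \eqref{eqn:I_e} and \eqref{eqn:I_s} term by term, after rewriting $I_{\mathrm{s}}(\eta)$ in a form directly comparable to $I_{\mathrm{e}}(\eta)$. Write
\begin{align*}
I_{\mathrm{e}}(\eta) - I_{\mathrm{s}}(\eta)
= \frac{nN}{\eta}\left(\frac{1}{1-c_n\eta} - \frac{1}{1-c_1\eta}\right)
+ \frac{c_n^2}{2(1-c_n\eta)^2} - \frac{n c_1^2}{2(1-c_1\eta)^2}.
\end{align*}
The first bracket simplifies to $\frac{(c_n-c_1)\eta}{(1-c_n\eta)(1-c_1\eta)}$. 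The factor $\eta$ cancels, so the first term equals
\[
\frac{nN(c_n-c_1)}{(1-c_n\eta)(1-c_1\eta)}.
\]
This is positive because $c_n > c_1$ for $n>1$, by property (ii) of $c_n$. For $n=1$ the two FIs coincide, so the statement is implicitly about $n\ge 2$, where the hypothesis is meaningful.

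The remaining step is to show that the positive first term dominates the possibly negative second part. We have $\frac{c_n^2}{2(1-c_n\eta)^2} > 0$, so it suffices to show
\[
\frac{nN(c_n-c_1)}{(1-c_n\eta)(1-c_1\eta)} \ \ge\ \frac{nc_1^2}{2(1-c_1\eta)^2}.
\]
Multiplying through by $(1-c_1\eta)^2/n > 0$, this becomes
\[
N(c_n-c_1)\,\frac{1-c_1\eta}{1-c_n\eta} \ \ge\ \frac{c_1^2}{2}.
\]
Since $c_n > c_1$ and $\eta\in(0,1]$, we have $1-c_1\eta > 1-c_n\eta > 0$, so the ratio exceeds $1$. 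The left-hand side is therefore greater than $N(c_n-c_1)$. The hypothesis $N > \frac{c_1^2}{2(c_n-c_1)}$ gives exactly $N(c_n-c_1) > c_1^2/2$, and the strict inequality $I_{\mathrm{e}} > I_{\mathrm{s}}$ follows.

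The main point needing care is not algebraic but conceptual: the $n$ multiplying the quantum-noise term of $I_{\mathrm{s}}$ must be absorbed by the $n$ in the classical term of $I_{\mathrm{e}}$, rather than by the $c_n^2$ term. This is why we drop $\frac{c_n^2}{2(1-c_n\eta)^2}$ entirely and compare against the $nN$ term. We also need the monotonicity $c_n > c_1$, which follows from \eqref{eqn:c_n} since $x \mapsto \frac{2\sqrt{x}}{\sqrt{x+1}+\sqrt{x}}$ is increasing; this is already stated as property (ii).
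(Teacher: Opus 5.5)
Your proposal is correct and follows essentially the same route as the paper. You split $I_{\mathrm{e}}(\eta)-I_{\mathrm{s}}(\eta)$ into the classical and quantum-noise parts, drop the positive $\frac{c_n^2}{2(1-c_n\eta)^2}$ term, and use $1-c_n\eta<1-c_1\eta$ to reduce everything to $2N(c_n-c_1)>c_1^2$; your remark that the hypothesis only makes sense for $n\ge 2$ (where $c_n>c_1$) is a small clarification the paper leaves implicit.
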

\begin{proof}
Consider $I_{\mathrm{e}}(\eta) - I_{\mathrm{s}}(\eta)$,
\begin{align*}
I_{\mathrm{e}}(\eta) - I_{\mathrm{s}}(\eta)
& = \frac{nN}{\eta}\left( \frac{1}{1-c_n \eta} - \frac{1}{1-c_1 \eta} \right) + \frac{1}{2} \left( \frac{c_n^2}{(1-c_n \eta)^2} - \frac{n c_1^2}{(1-c_1 \eta)^2}\right) \\
& > \frac{nN(c_n - c_1)}{(1-c_n \eta)(1-c_1 \eta)} - \frac{n c_1^2}{2(1-c_1 \eta)^2} \\
& > \frac{nN(c_n - c_1)}{(1-c_1 \eta)^2} - \frac{n c_1^2}{2(1-c_1 \eta)^2}
&& (c_n > c_1 > 0) \\
& = \frac{n}{2(1-c_1 \eta)^2} \left(2N(c_n - c_1) - c_1^2 \right).
\end{align*}
To have $I_{\mathrm{e}}(\eta) > I_{\mathrm{s}}(\eta)$, it suffices to set $2N(c_n - c_1) > c_1^2$.
\end{proof}

\begin{fact} \label{fact:single_ch_q_vs_c}
If $N > \left( \frac{1}{c_n \eta} - 1 \right) N_a$, $I_{\mathrm{e}}(\eta) > I_{\mathrm{c}}(\eta)$.
In particular, $I_{\mathrm{s}}(\eta) > I_{\mathrm{c}}(\eta)$ if $N > \left( \frac{1}{c_1 \eta} - 1 \right) N_a$.
\end{fact}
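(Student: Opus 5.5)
We compare $I_{\mathrm{e}}(\eta)$ from~\eqref{eqn:I_e} directly with $I_{\mathrm{c}}(\eta) = (N+N_a)n/\eta$, in the same spirit as the proof of Fact~\ref{fact:single_ch_e_vs_s}. The plan is to drop the nonnegative second term of $I_{\mathrm{e}}(\eta)$ and work only with its mean-driven first term. This gives
\begin{align*}
I_{\mathrm{e}}(\eta) - I_{\mathrm{c}}(\eta)
> \frac{nN}{\eta(1-c_n\eta)} - \frac{n(N+N_a)}{\eta}
= \frac{n}{\eta(1-c_n\eta)}\Bigl(N - (N+N_a)(1-c_n\eta)\Bigr),
\end{align*}
where the strict inequality holds because $c_n^2/(2(1-c_n\eta)^2) > 0$, using $c_n \in (0,1)$ and $\eta \in (0,1]$.

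The prefactor $n/(\eta(1-c_n\eta))$ is positive, again since $c_n\eta < 1$. It therefore suffices to show $N - (N+N_a)(1-c_n\eta) \ge 0$. Expanding, this is equivalent to $c_n\eta N \ge (1-c_n\eta)N_a$, that is, $N \ge \left(\frac{1}{c_n\eta}-1\right)N_a$. The hypothesis gives this with strict inequality, so $I_{\mathrm{e}}(\eta) > I_{\mathrm{c}}(\eta)$. Because the dropped term already supplies strictness, the argument would also go through with a non-strict hypothesis.

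For the squeezing claim, compare~\eqref{eqn:I_s} with $I_{\mathrm{c}}(\eta)$. Factoring out $n$, $I_{\mathrm{s}}(\eta)$ is $n$ times the per-pulse expression
\[
\frac{N}{\eta(1-c_1\eta)} + \frac{c_1^2}{2(1-c_1\eta)^2}.
\]
The same computation with $c_1$ in place of $c_n$, and with the factor $n$ carried outside rather than inside the first term, yields the condition $N > \left(\frac{1}{c_1\eta}-1\right)N_a$.

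The only step needing care is checking the signs: that $1-c_n\eta>0$, which follows from property (i) of $c_n$, and that the discarded Fisher-information term from the covariance is nonnegative. Everything else is one line of algebra, so I expect no real obstacle.
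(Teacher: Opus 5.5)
Your proposal is correct and is essentially the paper's argument. You drop the strictly positive covariance term $\frac{c_n^2}{2(1-c_n\eta)^2}$ and compare $\frac{nN}{\eta(1-c_n\eta)}$ with $I_{\mathrm{c}}(\eta)$ by factoring the difference, whereas the paper states the equivalent ratio bound $\frac{I_{\mathrm{e}}(\eta)}{I_{\mathrm{c}}(\eta)} > \frac{N}{(N+N_a)(1-c_n\eta)}$; your justification that the dropped term is positive for all $N$ (rather than ``for large enough $N$'') is slightly cleaner, and your remark that a non-strict hypothesis suffices is also correct.
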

\begin{proof}
Note that for large enough $N$, the first term in $I_{\mathrm{e}}(\eta)$~\eqref{eqn:I_e} dominates, therefore $I_{\mathrm{e}}(\eta) > \frac{nN}{\eta(1-c_n \eta)}$.
This gives a simple lower bound $\frac{I_{\mathrm{e}}(\eta)}{I_{\mathrm{c}}(\eta)} > \frac{N}{(N+N_a)(1-c_n \eta)}$.
The first bound $N > \left( \frac{1}{c_n \eta} - 1 \right) N_a$ follows from setting $\frac{N}{(N+N_a)(1-c_n \eta)} > 1$.
Similarly, $\frac{I_{\mathrm{s}}(\eta)}{I_{\mathrm{c}}(\eta)} > \frac{N}{(N+N_a)(1-c_1 \eta)}$, and the second part of the claim follows.
\end{proof}
Though we tend to opt for simplicity over tightness throughout this work, Facts~\ref{fact:single_ch_e_vs_s} and~\ref{fact:single_ch_q_vs_c} quantify what it means for the classical energy $N$ to be large enough, a notion that recurs in this section.
Moreover, Facts~\ref{fact:single_ch_e_vs_s} and \ref{fact:single_ch_q_vs_c} provide practical bounds for realistic parameters.
For example, with $10 \log_{10}(e^{2r}) = 6 \mathrm{ dB}$ of squeezing per-pulse, for $n=2$ probes, Fact~\ref{fact:single_ch_e_vs_s} implies that entanglement augmentation outperforms squeezing alone, given $N>3.03$ photon equivalent of classical coherent energy, and regardless of how large the underlying channel transmissivity $\eta$ is.
On the other hand, for$\eta = 0.8$, Fact~\ref{fact:single_ch_q_vs_c} states that squeezed states are still better than classical coherent states if we merely have $N > 0.4$.

\subsection{Sharing entanglement across channels} \label{sec:physical:multi}

When there are multiple channels with unknown transmissivities, we further have the option to \emph{split} the entangled probes \emph{spatially} across different channels.
An interesting question that arises, is whether the correlation among probes, introduced by entanglement, still brings benefits as in the single-channel case (\S~\ref{sec:physical:single}).
Next, through analyzing two special configurations, we give strong evidence that the answer to this question is likely to be in the negative.

To be concrete, for the first configuration, we think of the $n$ channels as forming a tree, with one internal node $u$ as the sender of all the probes, and each leaf $v_i$ as the receiver for the channel along edge $(u,v_i)$, $i \in [n]$.
Consider $n$ probes (pulses), where the $i$-th probe is sent through the channel along $(u,v_i)$, whose unknown transmissivity is denoted as $\eta_i$ for all $i \in [n]$.
Crucially, we assume the $\eta_i$s do not depend on each other.
Fix the classical and quantum energy per-pulse, $N$ and $N_a$, we compare two cases:
\begin{enumerate*}[(i)]
\item each probe is a displaced squeezed state; and 
\item the $n$ probes are entangled.
\end{enumerate*}
We show that, with respect to \emph{both} the determinant of the FIM and the trace of its inverse, entanglement is in fact worse than independent squeezing.

When we go beyond the channel case and consider probes in a general network, the walk specified by each probe corresponds to a channel, and the channels may share edges.
Consequently, their channel transmissivities may also share common factors.
For example, when two probe both traverse a link $e$, their channel transmissivities share the common factor $\eta_e$.
Accordingly, we must consider whether the statements above still apply in this case.
Unfortunately, analyzing this case for $n$ channels, even with a  relatively clean setup of the the probes, turns out to be quite heavy.
We therefore include numerical results supporting the statements above for a two-channel setup in Appendix~\ref{apd:2_channel}, and analyze the configuration of $n$ independent channels in the remainder of this section.

\paragraph{Independent squeezing.}
Let $\bm{v} = (\sqrt{\eta_1}, \sqrt{\eta_2}, \dots, \sqrt{\eta_n})$.
When we send a displaced squeezed state with parameters $N$ and $N_a$ through each channel, by~\eqref{eqn:mu_sq}, \eqref{eqn:Sigma_sq} and~\eqref{eqn:c_n} for $n=1$,
\begin{align*}
& \bm{\mu}_{\mathrm{s,d}}(\bm{\eta}) = \sqrt{N} \cdot \bm{v}, \\
& \bm{\Sigma}_{\mathrm{s,d}}(\bm{\eta}) = \frac{1}{4} \bm{I} - \frac{c_1}{4} \cdot \operatorname{diag}(\eta_1, \eta_2, \dots, \eta_n).
\end{align*}
Next we consider the FIM of this case, $\mathcal{I}_{\mathrm{s,d}}$.
Since each channel transmissivity $\eta_i$ has no dependence on $\eta_j$ for $j \neq i$, and the covariance matrix $\bm{\Sigma}_{\mathrm{s,d}}(\eta)$ is diagonal, it is easy to check that $\mathcal{I}_{\mathrm{s,d}}$ is also diagonal, where $(\mathcal{I}_{\mathrm{s,d}})_{i,i} = \frac{N}{\eta_i (1-c_1 \eta_i)} + \frac{c_1^2}{2(1-c_1 \eta_i)^2}$, following~\eqref{eqn:I_s} for $n=1$.
Then, 
\begin{align}
\det(\mathcal{I}_{\mathrm{s,d}}) 
& = \prod_{i=1}^n \left( \frac{N}{\eta_i (1-c_1 \eta_i)} + \frac{c_1^2}{2(1-c_1 \eta_i)^2} \right)  \nonumber \\
& = \frac{N^n}{\prod_{i=1}^n \eta_i} \cdot \prod_{i=1}^n \left( \frac{1}{1-c_1 \eta_i} + \frac{c_1^2 \eta_i}{2N(1-c_1 \eta_i)^2} \right) \label{eqn:det(I_{s,d})}, \\
\Tr(\mathcal{I}_{\mathrm{s,d}}^{-1})
& = \sum_{i=1}^n \frac{2\eta_i (1-c_1 \eta_i)^2}{2N(1-c_1 \eta_i) + c_1^2 \eta_i}. \nonumber
\end{align}

\paragraph{Correlated probes through entanglement.}
Here we split the block of $n$ entangled pulses across $n$ channels, where the $i$-th pulse is sent through the channel with transmissivity $\eta_i$, then we have
\begin{align}
& \bm{\mu}_{\mathrm{e,d}}(\bm{\eta}) = \sqrt{N} \cdot \bm{v}, \label{eqn:mu_ent_split} \\
& \bm{\Sigma}_{\mathrm{e,d}}(\bm{\eta}) = \frac{1}{4} \bm{I} - \frac{c_n}{4n} \cdot \bm{v} \bm{v}^T. \label{eqn:Sigma_ent_split}
\end{align}
In the special case of $\eta_1 = \eta_2 = \cdots = \eta_n = \eta$, $\bm{v} = \sqrt{\eta} \cdot \bm{u}$ and $\bm{v} \bm{v}^T = \eta \cdot \bm{u} \bm{u}^T$, \eqref{eqn:mu_ent_split} and~\eqref{eqn:Sigma_ent_split} indeed reduce to~\eqref{eqn:mu_ent} and~\eqref{eqn:Sigma_ent} respectively.
Next we shall see that the FIM for this case, denoted as $\mathcal{I}_{\mathrm{e,d}}$, is in fact the sum of a diagonal matrix and a scaling of the all-ones matrix.
\begin{claim} \label{clm:I_{e,d}}
$\mathcal{I}_{\mathrm{e,d}} = \beta \cdot \operatorname{diag}(\frac{1}{\eta_1}, \frac{1}{\eta_2}, \dots, \frac{1}{\eta_n}) + \gamma \bm{u} \bm{u}^T$, where 
$\bm{u}$ is the $n$-dimensional column vector of ones, 
$\beta = N + \frac{c_n^2 S_{\bm{\eta}}}{4n(n-c_n S_{\bm{\eta}})}$ and
$\gamma = \frac{c_nN}{n-c_nS_{\bm{\eta}}} + \frac{c_n^2(n+c_n S_{\bm{\eta}})}{4n(n-c_n S_{\bm{\eta}})^2}$, 
with $S_{\bm{\eta}} = \sum_{i=1}^n \eta_i$.
\end{claim}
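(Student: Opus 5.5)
The plan is to evaluate \eqref{eqn:FIM} directly, using the rank-one structure of \eqref{eqn:Sigma_ent_split}, exactly as in the single-channel entanglement computation. Set $S=S_{\bm{\eta}}$ and note that $\bm{v}^T\bm{v}=S$. The Sherman--Morrison formula then gives
\[
\bm{\Sigma}_{\mathrm{e,d}}^{-1}=4\bm{A},\qquad \bm{A}=\bm{I}+\kappa\,\bm{v}\bm{v}^T,\qquad \kappa=\frac{c_n}{n-c_nS},
\]
which is valid since $c_nS<n$ because $c_n\in(0,1)$ and $\eta_i\le 1$. We will repeatedly use the identity $1+\kappa S=\frac{n}{n-c_nS}$.

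The derivatives are easy because $\bm{v}$ depends on $\eta_i$ only through its $i$-th entry. Write $\bm{a}_i=\partial\bm{v}/\partial\eta_i=\frac{1}{2\sqrt{\eta_i}}\bm{e}_i$. Then
\[
\frac{\partial\bm{\mu}}{\partial\eta_i}=\sqrt{N}\,\bm{a}_i,\qquad
\frac{\partial\bm{\Sigma}}{\partial\eta_i}=-\frac{c_n}{4n}\bigl(\bm{a}_i\bm{v}^T+\bm{v}\bm{a}_i^T\bigr).
\]
All quantities reduce to three scalar inner products against $\bm{A}$:
\[
\bm{a}_i^T\bm{A}\bm{a}_j=\frac{\delta_{ij}}{4\eta_i}+\frac{\kappa}{4},\qquad
\bm{a}_i^T\bm{A}\bm{v}=\frac{1+\kappa S}{2},\qquad
\bm{v}^T\bm{A}\bm{v}=S(1+\kappa S).
\]
These follow from $\bm{a}_i^T\bm{v}=\tfrac12$ and $\bm{a}_i^T\bm{a}_j=\frac{\delta_{ij}}{4\eta_i}$.

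For the mean term, we get $4N\,\bm{a}_i^T\bm{A}\bm{a}_j=\frac{N\delta_{ij}}{\eta_i}+\frac{c_nN}{n-c_nS}$. This already gives the $N$ part of $\beta$ and the first summand of $\gamma$.

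For the covariance term, we expand $\frac12\cdot 16\cdot\frac{c_n^2}{16n^2}\Tr\bigl(\bm{A}(\bm{a}_i\bm{v}^T+\bm{v}\bm{a}_i^T)\bm{A}(\bm{a}_j\bm{v}^T+\bm{v}\bm{a}_j^T)\bigr)$ into four traces. Each has the form $\Tr(\bm{A}\bm{x}\bm{y}^T\bm{A}\bm{z}\bm{w}^T)=(\bm{y}^T\bm{A}\bm{z})(\bm{w}^T\bm{A}\bm{x})$. By symmetry of $\bm{A}$ they pair up into
\[
2\bigl[(\bm{a}_i^T\bm{A}\bm{a}_j)(\bm{v}^T\bm{A}\bm{v})+(\bm{a}_i^T\bm{A}\bm{v})(\bm{a}_j^T\bm{A}\bm{v})\bigr].
\]

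Substituting the three scalars separates this into a $\delta_{ij}/\eta_i$ part and a constant part. The diagonal part is $\frac{c_n^2}{n^2}\cdot\frac{1}{4\eta_i}\cdot\frac{Sn}{n-c_nS}=\frac{c_n^2S}{4n(n-c_nS)\eta_i}$, which is precisely the remaining piece of $\beta$. The constant part is
\[
\frac{c_n^2}{n^2}\left[\frac{\kappa}{4}\,S(1+\kappa S)+\frac{(1+\kappa S)^2}{4}\right].
\]
Simplifying it with $1+\kappa S=\frac{n}{n-c_nS}$ should give $\frac{c_n^2(n+c_nS)}{4n(n-c_nS)^2}$. Indeed, the bracket equals $\frac{n}{4(n-c_nS)^2}(c_nS+n)$.

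Adding the mean and covariance contributions gives the form $\beta\operatorname{diag}(1/\eta_i)+\gamma\bm{u}\bm{u}^T$ stated in Claim~\ref{clm:I_{e,d}}. The main obstacle is purely bookkeeping in the trace term: getting the pairing of the four cross terms and the factors of $2$ and $\tfrac12$ right. As a sanity check, I would verify that setting $n=1$ recovers \eqref{eqn:I_e} with $n=1$, since then $\beta+\gamma$ at $S=\eta$ must equal $\frac{N}{\eta(1-c_1\eta)}\cdot\eta+\dots$ after multiplying appropriately.
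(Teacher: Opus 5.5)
Your proof is correct and follows the paper's route: Sherman--Morrison for $\bm{\Sigma}_{\mathrm{e,d}}^{-1}$, then direct evaluation of both terms of \eqref{eqn:FIM}. Your identity $\Tr(\bm{A}\bm{x}\bm{y}^T\bm{A}\bm{z}\bm{w}^T)=(\bm{y}^T\bm{A}\bm{z})(\bm{w}^T\bm{A}\bm{x})$ and the three bilinear forms just compress the paper's nine-term trace expansion into two products, and all your constants check out; the one slip is in the optional sanity check, where for $n=1$ the scalar FIM is $\beta/\eta+\gamma$ rather than $\beta+\gamma$, which does reduce to \eqref{eqn:I_e}.
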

The proof of Claim~\ref{clm:I_{e,d}}, given in Appendix~\ref{apd:proof_I_{e,d}}, involves careful yet straightforward derivations of each term in~\eqref{eqn:FIM}.
Claim~\ref{clm:I_{e,d}} shows that $\mathcal{I}_{\mathrm{e,d}}$ is yet again well structured, which allows us to derive $\det(\mathcal{I}_{\mathrm{e,d}})$ through a direct application of the matrix determinant lemma,
\begin{align}
\det(\mathcal{I}_{\mathrm{e,d}})
& = \frac{\beta^n}{\prod_{i=1}^n \eta_i} \cdot \left( 1+ \frac{1}{\beta}\bm{u}^T\cdot \operatorname{diag}(\eta_1, \eta_2, \dots, \eta_n) \cdot \gamma \bm{u} \right) \nonumber \\
& = \frac{\beta^n}{\prod_{i=1}^n \eta_i} \cdot \left( 1+ \frac{\gamma S_{\bm{\eta}}}{\beta}\right) \nonumber \\
& = \frac{2n}{(n-c_n S_{\bm{\eta}})} \cdot \frac{N^n}{\prod_{i=1}^n \eta_i} \cdot \left( 1 + \frac{c_n^2 S_{\bm{\eta}}}{4Nn(n-c_n S_{\bm{\eta}})} \right)^n \cdot 
\underbrace{\frac{2Nn(n-c_n S_{\bm{\eta}}) + c_n^2 S_{\bm{\eta}}}{4Nn(n-c_n S_{\bm{\eta}}) + c_n^2 S_{\bm{\eta}}}}_{(\spadesuit)}. \label{eqn:det(I_{e,d})}
\end{align}
By the Sherman-Morrison formula, we can also obtain $\mathcal{I}_{\mathrm{e,d}}^{-1}$, 
\begin{equation*}
\mathcal{I}_{\mathrm{e,d}}^{-1}
= \frac{1}{\beta} \cdot \operatorname{diag}(\frac{1}{\eta_1}, \frac{1}{\eta_2}, \dots, \frac{1}{\eta_n})
- \frac{\gamma}{\beta (\beta + \gamma S_{\bm{\eta}})} \cdot \bm{\eta} \bm{\eta}^T.
\end{equation*}
Denote $Q_{\bm{\eta}} = \sum_{i=1}^n \eta_i^2$, we have the trace of the inverse FIM as follows:
\begin{align*}
\Tr(\mathcal{I}_{\mathrm{e,d}}^{-1})
& = \frac{S_{\bm{\eta}}}{\beta} - \frac{\gamma Q_{\bm{\eta}}}{\beta (\beta + \gamma S_{\bm{\eta}})} \\
& = 2(n-c_n S_{\bm{\eta}}) \cdot \frac{4Nn(n-c_n S_{\bm{\eta}})(nS_{\bm{\eta}} - c_n Q_{\bm{\eta}}) + c_n^2(n Q_{\bm{\eta}} + c_nS_{\bm{\eta}}Q_{\bm{\eta}}) - 2nS_{\bm{\eta}}^2}{(4Nn(n-c_n S_{\bm{\eta}}) + c_n^2 S_{\bm{\eta}})(2Nn(n-c_n S_{\bm{\eta}}) + c_n^2 S_{\bm{\eta}})}.
\end{align*}

\subsubsection{Spatial entanglement induces a smaller determinant of the FIM}

Next, we show that independent squeezing is indeed better than entanglement, with respect to the determinant of the corresponding FIMs.
A formal proof would be inevitably cumbersome, so here we opt for an intuitive argument on why $\det(\mathcal{I}_{\mathrm{s,d}}) > \det(\mathcal{I}_{\mathrm{e,d}})$, as long as the classical coherent energy $N$ and the sum of the transmissivities $S_{\bm{\eta}}$ are not too small.

As is customary in proofs of such statements, we show that some lower bound on $\det(\mathcal{I}_{\mathrm{s,d}})$ \eqref{eqn:det(I_{s,d})} is still larger than some upper bound on $\det(\mathcal{I}_{\mathrm{e,d}})$ \eqref{eqn:det(I_{e,d})}.
In comparing these expressions, we ignore the common factor $\frac{N^n}{\prod_{i=1}^n \eta_i}$.
To obtain a suitable lower bound on $\det(\mathcal{I}_{\mathrm{s,d}})$, we apply Jensen's inequality to $\log\left( \frac{1}{1-c_1 x} + \frac{c_1^2 x}{2N(1-c_1 x)^2} \right)$, and get that $\prod_{i=1}^n \left( \frac{1}{1-c_1 \eta_i} + \frac{c_1^2 \eta_i}{2N(1-c_1 \eta_i)^2} \right)$ attains its minimum at $\eta_1 = \eta_2 = \cdots \eta_n = \frac{S_{\bm{\eta}}}{n}$.
Then define 
\begin{align}
f(S_{\bm{\eta}}) 
& = \prod_{i=1}^n \left( \frac{1}{1-c_1 \cdot \frac{S_{\bm{\eta}}}{n}} + \frac{c_1^2 \cdot \frac{S_{\bm{\eta}}}{n}}{2N(1-c_1 \cdot \frac{S_{\bm{\eta}}}{n})^2} \right) \nonumber \\
& = \left( \frac{n}{n-c_1 S_{\bm{\eta}}} \right)^n \left( 1 + \frac{c_1^2 S_{\bm{\eta}}}{2N(n-c_1 \cdot S_{\bm{\eta}})}  \right)^n. \label{eqn:f(S_eta)}
\end{align}
To upper bound $\det(\mathcal{I}_{\mathrm{e,d}})$, notice that for $N$ that is not too small, the $(\spadesuit)$ term in~\eqref{eqn:det(I_{e,d})} is very close to $0.5$, and can be upper bounded by an absolute constant of $0.55$.
Therefore, define
\begin{equation} \label{eqn:g(S_eta)}
g(S_{\bm{\eta}}) 
= \frac{1.1n}{n-c_n S_{\bm{\eta}}} \cdot \left( 1 + \frac{c_n^2 S_{\bm{\eta}}}{4Nn(n-c_n S_{\bm{\eta}})} \right)^n.
\end{equation}

$f(S_{\bm{\eta}}) > g(S_{\bm{\eta}})$ would be a sufficient condition for $\det(\mathcal{I}_{\mathrm{s,d}}) > \det(\mathcal{I}_{\mathrm{e,d}})$.
Note that for $N$ that is not too small, $\frac{c_1^2 S_{\bm{\eta}}}{2N(1-c_1 \cdot S_{\bm{\eta}})} > \frac{c_n^2 S_{\bm{\eta}}}{4Nn(n-c_n S_{\bm{\eta}})}$, so the second factor in~\eqref{eqn:f(S_eta)} is larger than that of~\eqref{eqn:g(S_eta)}.
Moreover, the first factor in~\eqref{eqn:f(S_eta)}, $\left( \frac{n}{n-c_1 S_{\bm{\eta}}} \right)^n$, grows more rapidly than $\frac{1.1n}{(n-c_n S_{\bm{\eta}})}$ in~\eqref{eqn:g(S_eta)}, albeit with a smaller initial value.
We hence conclude that given $N$ and $S_{\bm{\eta}}$ that are not too small, we have $f(S_{\bm{\eta}}) > g(S_{\bm{\eta}})$, and consequently $\det(\mathcal{I}_{\mathrm{s,d}}) > \det(\mathcal{I}_{\mathrm{e,d}})$.

As an example, given $N_a = 0.558$, which corresponds to having $6 \mathrm{dB}$ of squeezing per-pulse (see the end of \S~\ref{sec:physical:single}), if we send $n=2$ probes, we merely need the classical coherent energy $N \geq 6$ to guarantee that the $(\spadesuit)$ term in~\eqref{eqn:det(I_{e,d})} is bounded from above by $0.55$.
If we further impose that the combined transmissivities from the $n=2$ channels $S_{\bm{\eta}} \geq 0.26$, $f(S_{\bm{\eta}}) > g(S_{\bm{\eta}})$ holds, and we have $\det(\mathcal{I}_{\mathrm{s,d}}) > \det(\mathcal{I}_{\mathrm{e,d}})$.

\subsubsection{Spatial entanglement induces a larger trace of the inverse FIM}

\begin{figure}[t]
\centering
\includegraphics[width=0.6\textwidth]{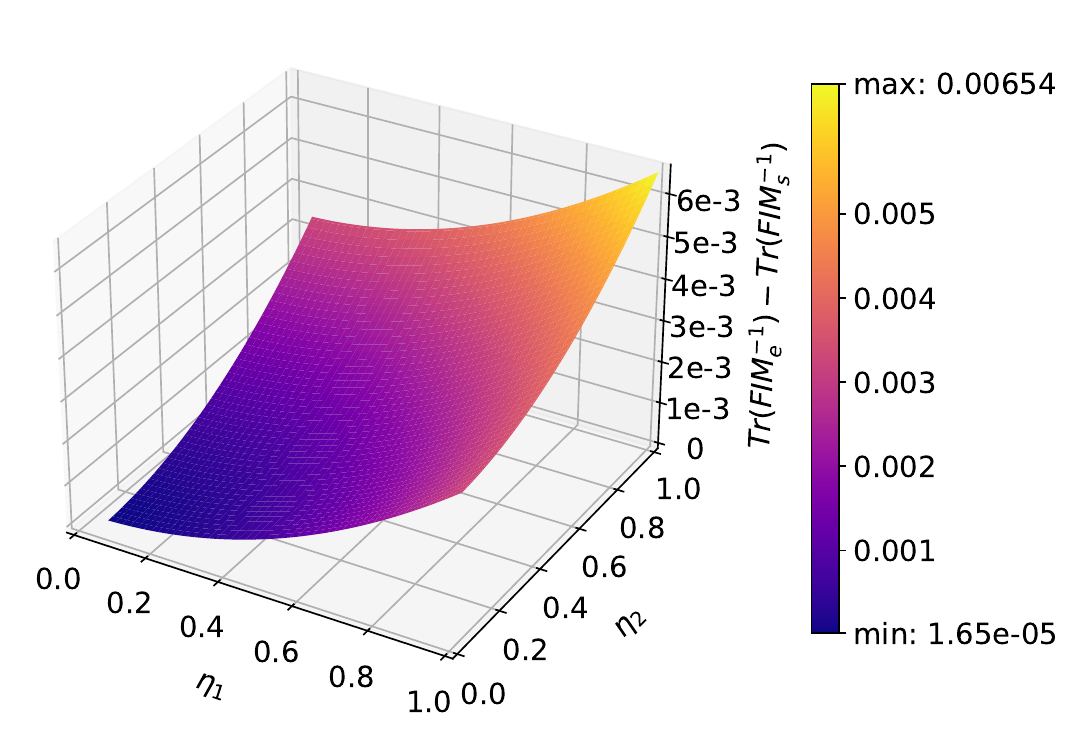}
\caption{$\Tr(\mathcal{I}_{\mathrm{e,d}}^{-1}) - \Tr(\mathcal{I}_{\mathrm{s,d}}^{-1})$ with respect to $\eta_1$ and $\eta_2$.}
\label{fig:n_indep_channels_tr}
\end{figure}

Finally, we sketch the argument for $\Tr(\mathcal{I}_{\mathrm{s,d}}^{-1}) < \Tr(\mathcal{I}_{\mathrm{e,d}}^{-1})$.
To upper bound $\Tr(\mathcal{I}_{\mathrm{s,d}}^{-1})$, note that the function $\frac{2\eta(1-c_1\eta)^2}{2N(1-c_1\eta) + c_1^2 \eta}$ is now concave in $\eta$, so Jensen's inequality conveniently gives that $\Tr(\mathcal{I}_{\mathrm{s,d}}^{-1})$ attains its maximum at $\eta_1 = \eta_2 = \cdots = \eta_n = \frac{S_{\bm{\eta}}}{n}$:
\begin{equation*}
\Tr(\mathcal{I}_{\mathrm{s,d}}^{-1})
\leq \frac{2S_{\bm{\eta}}(n-c_1S_{\bm{\eta}})^2}{2Nn(n-c_1S_{\bm{\eta}}) + nc_1^2 S_{\bm{\eta}}}.
\end{equation*}
When $N$ is large enough, $2Nn(n-c_1S_{\bm{\eta}})$ dominates $nc_1^2 S_{\bm{\eta}}$, we drop the $nc_1^2 S_{\bm{\eta}}$ term in the denominator to get a cleaner upper bound,
\begin{equation} \label{ineq:UB_Tr_sq}
\Tr(\mathcal{I}_{\mathrm{s,d}}^{-1})
\leq \frac{S_{\bm{\eta}}(n-c_1S_{\bm{\eta}})}{Nn}.
\end{equation}
Similarly for $\Tr(\mathcal{I}_{\mathrm{e,d}}^{-1})$, the intuition is best understood by focusing on the dominating terms, even though this does not yield a lower bound:
\begin{align}
\Tr(\mathcal{I}_{\mathrm{e,d}}^{-1})
& \sim 2(n-c_n S_{\bm{\eta}}) \cdot \frac{4Nn(n-c_n S_{\bm{\eta}})(nS_{\bm{\eta}} - c_n Q_{\bm{\eta}})}{4Nn(n-c_n S_{\bm{\eta}}) \cdot 2Nn(n-c_n S_{\bm{\eta}})} \nonumber \\
& = \frac{nS_{\bm{\eta}} - c_n Q_{\bm{\eta}}}{Nn}. \label{asym:Tr_ent}
\end{align}
For $\Tr(\mathcal{I}_{\mathrm{s,d}}^{-1}) < \Tr(\mathcal{I}_{\mathrm{e,d}}^{-1})$ to hold, by comparing~\eqref{ineq:UB_Tr_sq} and~\eqref{asym:Tr_ent}, we approximately need 
\begin{equation} \label{ineq:condition_Tr}
c_1 S_{\bm{\eta}}^2 > c_n Q_{\bm{\eta}}.
\end{equation}
Notice that~\eqref{ineq:condition_Tr} naturally holds if $S_{\bm{\eta}}$ is not too small.
This is because, $Q_{\bm{\eta}} < S_{\bm{\eta}}$ by definition, so when $S_{\bm{\eta}} \geq \frac{c_n}{c_1}$, we have $c_1 S_{\bm{\eta}}^2 \geq c_n S_{\bm{\eta}} > c_n Q_{\bm{\eta}}$.
When $S_{\bm{\eta}}$ is small, for~\eqref{ineq:condition_Tr} to hold, we need $\eta_1, \eta_2 \dots, \eta_n$ not too dispersed and concentrate around their mean $\frac{S_{\bm{\eta}}}{n}$.
The extent to which they need to be concentrated is governed by the quantum energy $N_a$ through the ratio $\frac{c_1}{c_n}$.
For a fixed $n$, as we increase $N_a$, $\frac{c_1}{c_n}$ increases towards $1$, and $\eta_1, \eta_2 \dots, \eta_n$ can be slightly more dispersed while satisfying~\eqref{ineq:condition_Tr}.

Although this argument provides intuition for the general case, it is not a tight analysis.
In the case of sending $n=2$ probes, with classical coherent energey per-pulse $N = 100$, quantum energy per-pulse $N_a = 0.558$ ($6 \mathrm{dB}$ of squeezing as before), numerical results indicate that $\Tr(\mathcal{I}_{\mathrm{s,d}}^{-1}) < \Tr(\mathcal{I}_{\mathrm{e,d}}^{-1})$ for all choices of $\eta_1$ and $\eta_2$ (Figure~\ref{fig:n_indep_channels_tr}).

%% file: algo.tex
\section{Routing probes in a network}
\label{sec:graph}

A network is an undirected graph $G=(V,E)$ with a given set of monitors $M \subseteq V$, where each edge $e \in E$ is associated with an unknown channel transmissivity $\eta_e \in (0,1]$. 
We think of each probe $P$ as a multiset of edges.
While distinct probes may intersect on a subset of common edges, what we showed in \S~\ref{sec:physical:multi} alludes to the fact that sharing entanglement among such probes is unlikely to be beneficial. 
This motivates us to again consider how each \emph{individual} probe is routed in the network, without worrying about the correlation among distinct probes, a setting similar to~\cite{zheng2025quantum}.

Here our goal is to construct a set of probes $\mathcal{P}$ that \emph{can} identify all link transmissivities, while ensuring the observations from $\mathcal{P}$ are as informationally orthogonal as possible.
We introduce the notions of \emph{identifiability} and \emph{information orthogonality} in \S~\ref{sec:graph:setup}, and connect information orthogonality with graph properties in \S~\ref{sec:graph:connect}.
Then in \S~\ref{sec:graph:algo}, we propose an algorithm that constructs a set of probes $\mathcal{P}$ with these desirable properties.
We illustrate the ideas from this section using a small network in Figure~\ref{fig:example}.

\subsection{The probe construction problem} \label{sec:graph:setup}

\paragraph{Identifiability.} In stochastic network tomography, a set of probes $\mathcal{P}$, together with its underlying network (graph) $G=(V,E)$, defines a $\sizeof{\mathcal{P}} \times \sizeof{E}$ measurement matrix $A$. 
Each row of $A$ corresponds to a probe $P \in \mathcal{P}$, each column in $A$ corresponds to an edge $e \in E$, and the $(i,j)$-entry $A_{i,j}$ describes the number of times the $i$-th probe $P_i$ passes through the $j$-th edge $e_j$. 
Using the measurement matrix $A$, the problem of estimating link transmissivities in a network can be cast as a linear system $A \bm{z} = \bm{w}$, where $\bm{z} = (\log \eta_1, \log \eta_2, \dots, \log \eta_{\sizeof{E}})$ is a bijection of $\bm{\eta}$, and $\bm{w} = (\log (\prod_{e \in P_1} \eta_e), \log (\prod_{e \in P_2} \eta_e), \dots, \log (\prod_{e \in P_{\sizeof{\mathcal{P}}}} \eta_e))$ is a column vector related to the transmissivities of each probe in $\mathcal{P}$. 
Since the observations from $\mathcal{P}$ depend on $\bm{\eta}$ only through $\bm{w}$, and $\bm{w}$ can be estimated consistently from $\mathcal{P}$, it is known that $\bm{\eta}$ is \emph{identifiable} if and only if $A$ has full column rank~\cite{he2015fisher}.
This indicates, that if we think of each column in $A$ as the \emph{signature} of that edge, for probes $\mathcal{P}$ to be able to identify \emph{all} link transmissivities, the signatures of the edges must be linearly independent.

\paragraph{Information orthogonality.}
In classical network tomography, once given a measurement matrix, and some experiment design, the problem of solving the unknown parameters is often treated as a maximum likelihood estimation problem~\cite{castro2004network,he2015fisher,xi2006estimating}.
In our context, if we use quantum implementations of the probes (\S~\ref{sec:physical:single}), solving the maximum likelihood estimator (MLE) of $\bm{\eta}$ requires solving a system of nonlinear equations, which generally admits no analytical solution.
Then, to obtain an approximation of the MLE, one has to rely on various gradient-based optimization methods, such as the BFGS algorithm~\cite{broyden1970convergence,fletcher1970new,goldfarb1970family,shanno1970conditioning}.

Recall that $\sizeof{E}$ is the number of links in a network.
If we send $w$ probes in total, and on average each probe passes through $t$ edges, each iteration of the BFGS algorithm has time complexity $O(\sizeof{E}^2+wt)$, and existing heuristics suggest that BFGS often requires roughly $O(\sizeof{E})$ iterations before converging rapidly~\cite{nocedal2006numerical}.
For a large network, a time complexity of $\Omega(\sizeof{E}^3)$ can be prohibitively expensive. 
In this case, if the probes are designed in a way that enables solving for smaller subnetworks in parallel, the time needed to solve for a large network can be greatly reduced.
This brings about the notion of information orthogonality.

\begin{definition}[Information orthogonality~\cite{cox1987parameter}]
$k$ sets of parameters $H_1, H_2, \dots, H_k$ are \emph{information orthogonal}, if for any $\eta_i \in H_x$ and any $\eta_j \in H_y$ with distinct $x, y \in [k]$, the corresponding entry $\mathcal{I}_{i,j}$ in the FIM $\mathcal{I}$ satisfies $\mathcal{I}_{i,j} = 0$.
\end{definition}

Specific to our case of network tomography, information orthogonality implies optimization decoupling.\footnote{This is not the case in general. However, in our context, the intuition is that, probes in one subnetwork does not contribute to objective functions (e.g., the maximum likelihood function) of other networks, therefore the objective functions of separate subnetworks can be decoupled, and information orthogonality implies optimization decoupling.}
This is to say, that the more information orthogonal sets of parameters we have, the more instances can be run in parallel.
Therefore, in constructing the probes, we solve the following problem:
\begin{problem} \label{prob}
Given a graph $G$ and a set of monitors $M$, construct a set of probes $\mathcal{P}$ that 
\begin{enumerate*}[label=(\roman*)]
\item can identify all link transmissivities; and 
\item maximizes the number of information orthogonal subsets of link transmissivities.
\end{enumerate*}
\end{problem}

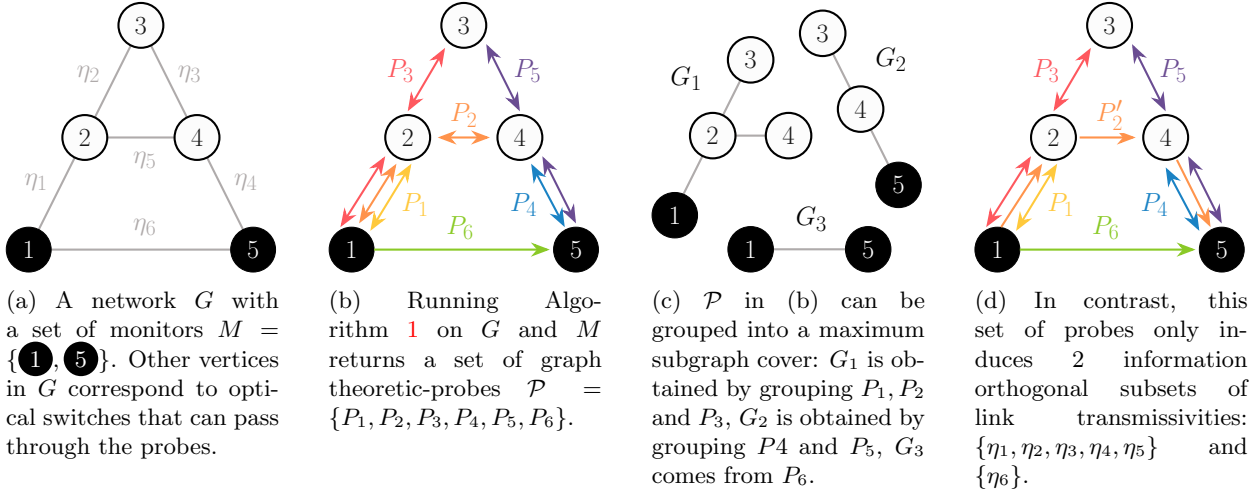
\begin{figure}
\centering
\begin{subfigure}[t]{0.22\textwidth}
\centering
\resizebox{1\textwidth}{!}{%
\begin{circuitikz}
\tikzstyle{every node}=[font=\fontsize{27.9pt}{36.2pt}\selectfont]
\node [font=\fontsize{27.9pt}{36.2pt}\selectfont, color={rgb,255:red,175; green,170; blue,170}, text opacity=1, , inner xsep=0.080cm, inner ysep=0.085cm, rounded corners=0.020cm] at (11.5,9.5) {$\eta_1$};
\draw [ color={rgb,255:red,175; green,171; blue,171}, draw opacity=1, line width=2pt, short] (11.25,7.25) -- (18.75,7.25);
\draw [ color={rgb,255:red,175; green,171; blue,171}, draw opacity=1, line width=2pt, short] (11.25,7.25) -- (15,14.75);
\draw [ color={rgb,255:red,175; green,171; blue,171}, draw opacity=1, line width=2pt, short] (15,14.75) -- (18.75,7.25);
\draw [ color={rgb,255:red,175; green,171; blue,171}, draw opacity=1, line width=2pt, short] (13.125,11) -- (16.875,11);
\draw [ fill={rgb,255:red,0; green,0; blue,0}, fill opacity=1] (11.25,7.25) circle (0.75cm);
\node [font=\fontsize{27.9pt}{36.2pt}\selectfont, color={rgb,255:red,255; green,255; blue,255}, text opacity=1, , inner xsep=0.080cm, inner ysep=0.085cm, rounded corners=0.020cm] at (11.25,7.25) {$1$};
\draw [ fill={rgb,255:red,0; green,0; blue,0}, fill opacity=1] (18.75,7.25) circle (0.75cm);
\node [font=\fontsize{27.9pt}{36.2pt}\selectfont, color={rgb,255:red,255; green,255; blue,255}, text opacity=1, , inner xsep=0.080cm, inner ysep=0.085cm, rounded corners=0.020cm] at (18.75,7.25) {$5$};
\draw [ fill={rgb,255:red,252; green,252; blue,252}, fill opacity=1, line width=2pt ] (13.125,11) circle (0.75cm);
\draw [ fill={rgb,255:red,252; green,252; blue,252}, fill opacity=1, line width=2pt ] (16.875,11) circle (0.75cm);
\draw [ fill={rgb,255:red,252; green,252; blue,252}, fill opacity=1, line width=2pt ] (15,14.75) circle (0.75cm);
\node [font=\fontsize{27.9pt}{36.2pt}\selectfont, color={rgb,255:red,23; green,23; blue,23}, text opacity=1, , inner xsep=0.080cm, inner ysep=0.085cm, rounded corners=0.020cm] at (15,14.75) {$3$};
\node [font=\fontsize{27.9pt}{36.2pt}\selectfont, color={rgb,255:red,23; green,23; blue,23}, text opacity=1, , inner xsep=0.080cm, inner ysep=0.085cm, rounded corners=0.020cm] at (16.875,11) {$4$};
\node [font=\fontsize{27.9pt}{36.2pt}\selectfont, color={rgb,255:red,23; green,23; blue,23}, text opacity=1, , inner xsep=0.080cm, inner ysep=0.085cm, rounded corners=0.020cm] at (13.125,11) {$2$};
\node [font=\fontsize{27.9pt}{36.2pt}\selectfont, color={rgb,255:red,175; green,170; blue,170}, text opacity=1, , inner xsep=0.080cm, inner ysep=0.085cm, rounded corners=0.020cm] at (13.25,13.125) {$\eta_2$};
\node [font=\fontsize{27.9pt}{36.2pt}\selectfont, color={rgb,255:red,175; green,170; blue,170}, text opacity=1, , inner xsep=0.080cm, inner ysep=0.085cm, rounded corners=0.020cm] at (16.625,13.125) {$\eta_3$};
\node [font=\fontsize{27.9pt}{36.2pt}\selectfont, color={rgb,255:red,175; green,170; blue,170}, text opacity=1, , inner xsep=0.080cm, inner ysep=0.085cm, rounded corners=0.020cm] at (18.5,9.5) {$\eta_4$};
\node [font=\fontsize{27.9pt}{36.2pt}\selectfont, color={rgb,255:red,175; green,170; blue,170}, text opacity=1, , inner xsep=0.080cm, inner ysep=0.085cm, rounded corners=0.020cm] at (15.125,10.25) {$\eta_5$};
\node [font=\fontsize{27.9pt}{36.2pt}\selectfont, color={rgb,255:red,175; green,170; blue,170}, text opacity=1, , inner xsep=0.080cm, inner ysep=0.085cm, rounded corners=0.020cm] at (15.125,8) {$\eta_6$};
\end{circuitikz}
}%
\caption{A network $G$ with a set of monitors $M = \{\circled{1}, \circled{5}\}$. Other vertices in $G$ correspond to optical switches that can pass through the probes.}
\label{fig:example:G}
\end{subfigure}
\hfill
\begin{subfigure}[t]{0.22\textwidth}
\centering
\resizebox{1\textwidth}{!}{%
\begin{circuitikz}
\tikzstyle{every node}=[font=\fontsize{27.9pt}{36.2pt}\selectfont]
\draw [ fill={rgb,255:red,0; green,0; blue,0}, fill opacity=1] (11.25,7.25) circle (0.75cm);
\node [font=\fontsize{27.9pt}{36.2pt}\selectfont, color={rgb,255:red,255; green,255; blue,255}, text opacity=1, , inner xsep=0.080cm, inner ysep=0.085cm, rounded corners=0.020cm] at (11.25,7.25) {$1$};
\draw [ fill={rgb,255:red,0; green,0; blue,0}, fill opacity=1] (18.75,7.25) circle (0.75cm);
\node [font=\fontsize{27.9pt}{36.2pt}\selectfont, color={rgb,255:red,255; green,255; blue,255}, text opacity=1, , inner xsep=0.080cm, inner ysep=0.085cm, rounded corners=0.020cm] at (18.75,7.25) {$5$};
\draw [ fill={rgb,255:red,252; green,252; blue,252}, fill opacity=1, line width=2pt ] (13.125,11) circle (0.75cm);
\draw [ fill={rgb,255:red,252; green,252; blue,252}, fill opacity=1, line width=2pt ] (16.875,11) circle (0.75cm);
\draw [ fill={rgb,255:red,252; green,252; blue,252}, fill opacity=1, line width=2pt ] (15,14.75) circle (0.75cm);
\node [font=\fontsize{27.9pt}{36.2pt}\selectfont, color={rgb,255:red,23; green,23; blue,23}, text opacity=1, , inner xsep=0.080cm, inner ysep=0.085cm, rounded corners=0.020cm] at (15,14.75) {$3$};
\node [font=\fontsize{27.9pt}{36.2pt}\selectfont, color={rgb,255:red,23; green,23; blue,23}, text opacity=1, , inner xsep=0.080cm, inner ysep=0.085cm, rounded corners=0.020cm] at (16.875,11) {$4$};
\node [font=\fontsize{27.9pt}{36.2pt}\selectfont, color={rgb,255:red,23; green,23; blue,23}, text opacity=1, , inner xsep=0.080cm, inner ysep=0.085cm, rounded corners=0.020cm] at (13.125,11) {$2$};
\draw [ color={rgb,255:red,138; green,201; blue,39}, draw opacity=1, line width=2pt, -{Stealth[scale=1.5]}, ] (12,7.25) -- (17.875,7.25);
\draw [ color={rgb,255:red,255; green,147; blue,76}, draw opacity=1, line width=2pt, {Stealth[scale=1.5]}-{Stealth[scale=1.5]}, ] (11.5,8.125) -- (12.75,10.125);
\draw [ color={rgb,255:red,254; green,88; blue,94}, draw opacity=1, line width=2pt, {Stealth[scale=1.5]}-{Stealth[scale=1.5]}, ] (10.875,8) -- (12.375,10.375);
\draw [ color={rgb,255:red,254; green,88; blue,94}, draw opacity=1, line width=2pt, {Stealth[scale=1.5]}-{Stealth[scale=1.5]}, ] (13.125,11.875) -- (14.375,14);
\draw [ color={rgb,255:red,255; green,201; blue,60}, draw opacity=1, line width=2pt, {Stealth[scale=1.5]}-{Stealth[scale=1.5]}, ] (11.875,7.875) -- (13.25,10.125);
\draw [ color={rgb,255:red,106; green,76; blue,146}, draw opacity=1, line width=2pt, {Stealth[scale=1.5]}-{Stealth[scale=1.5]}, ] (18.75,8.125) -- (17.625,10.375);
\draw [ color={rgb,255:red,106; green,76; blue,146}, draw opacity=1, line width=2pt, {Stealth[scale=1.5]}-{Stealth[scale=1.5]}, ] (16.875,11.875) -- (15.75,14.125);
\draw [ color={rgb,255:red,255; green,147; blue,76}, draw opacity=1, line width=2pt, {Stealth[scale=1.5]}-{Stealth[scale=1.5]}, ] (14.125,11) -- (15.875,11);
\draw [ color={rgb,255:red,26; green,130; blue,196}, draw opacity=1, line width=2pt, {Stealth[scale=1.5]}-{Stealth[scale=1.5]}, ] (18.375,8) -- (17.25,10.125);
\node [font=\fontsize{27.9pt}{36.2pt}\selectfont, color={rgb,255:red,255; green,200; blue,60}, text opacity=1, , fill={rgb,255:red,255; green,255; blue,255}, fill opacity=1, text opacity=1, inner xsep=0.080cm, inner ysep=0.085cm, rounded corners=0.020cm] at (13.375,8.75) {$P_1$};
\node [font=\fontsize{27.9pt}{36.2pt}\selectfont, color={rgb,255:red,253; green,87; blue,94}, text opacity=1, , fill={rgb,255:red,255; green,255; blue,255}, fill opacity=1, text opacity=1, inner xsep=0.080cm, inner ysep=0.085cm, rounded corners=0.020cm] at (12.875,13.25) {$P_3$};
\node [font=\fontsize{27.9pt}{36.2pt}\selectfont, color={rgb,255:red,255; green,148; blue,76}, text opacity=1, , fill={rgb,255:red,255; green,255; blue,255}, fill opacity=1, text opacity=1, inner xsep=0.080cm, inner ysep=0.085cm, rounded corners=0.020cm] at (15,11.75) {$P_2$};
\node [font=\fontsize{27.9pt}{36.2pt}\selectfont, color={rgb,255:red,107; green,76; blue,146}, text opacity=1, , fill={rgb,255:red,255; green,255; blue,255}, fill opacity=1, text opacity=1, inner xsep=0.080cm, inner ysep=0.085cm, rounded corners=0.020cm] at (17.125,13.25) {$P_5$};
\node [font=\fontsize{27.9pt}{36.2pt}\selectfont, color={rgb,255:red,26; green,130; blue,196}, text opacity=1, , fill={rgb,255:red,255; green,255; blue,255}, fill opacity=1, text opacity=1, inner xsep=0.080cm, inner ysep=0.085cm, rounded corners=0.020cm] at (17,8.75) {$P_4$};
\node [font=\fontsize{27.9pt}{36.2pt}\selectfont, color={rgb,255:red,138; green,202; blue,38}, text opacity=1, , fill={rgb,255:red,255; green,255; blue,255}, fill opacity=1, text opacity=1, inner xsep=0.080cm, inner ysep=0.085cm, rounded corners=0.020cm] at (14.875,8) {$P_6$};
\end{circuitikz}
}%
\caption{Running Algorithm~\ref{algo:FindProbe} on $G$ and $M$ returns a set of graph theoretic-probes $\mathcal{P} = \{P_1, P_2, P_3, P_4, P_5, P_6\}$.}
\label{fig:example:output}
\end{subfigure}
\hfill
\begin{subfigure}[t]{0.22\textwidth}
\centering
\resizebox{1\textwidth}{!}{%
\begin{circuitikz}
\tikzstyle{every node}=[font=\fontsize{27.9pt}{36.2pt}\selectfont]
\draw [ color={rgb,255:red,175; green,171; blue,171}, draw opacity=1, line width=2pt, short] (16.125,5.625) -- (18.625,0.625);
\draw [ color={rgb,255:red,175; green,171; blue,171}, draw opacity=1, line width=2pt, short] (11.25,-0.25) -- (13.75,4.75);
\draw [ color={rgb,255:red,175; green,171; blue,171}, draw opacity=1, line width=2pt, short] (12.5,2.25) -- (15,2.25);
\draw [ fill={rgb,255:red,0; green,0; blue,0}, fill opacity=1] (11.25,-0.375) circle (0.75cm);
\node [font=\fontsize{27.9pt}{36.2pt}\selectfont, color={rgb,255:red,255; green,255; blue,255}, text opacity=1, , inner xsep=0.080cm, inner ysep=0.085cm, rounded corners=0.020cm] at (11.25,-0.375) {$1$};
\draw [ fill={rgb,255:red,0; green,0; blue,0}, fill opacity=1] (18.625,0.625) circle (0.75cm);
\node [font=\fontsize{27.9pt}{36.2pt}\selectfont, color={rgb,255:red,255; green,255; blue,255}, text opacity=1, , inner xsep=0.080cm, inner ysep=0.085cm, rounded corners=0.020cm] at (18.625,0.625) {$5$};
\draw [ fill={rgb,255:red,252; green,252; blue,252}, fill opacity=1, line width=2pt ] (12.5,2.25) circle (0.75cm);
\draw [ fill={rgb,255:red,252; green,252; blue,252}, fill opacity=1, line width=2pt ] (17.375,3.125) circle (0.75cm);
\draw [ fill={rgb,255:red,252; green,252; blue,252}, fill opacity=1, line width=2pt ] (16.125,5.625) circle (0.75cm);
\node [font=\fontsize{27.9pt}{36.2pt}\selectfont, color={rgb,255:red,23; green,23; blue,23}, text opacity=1, , inner xsep=0.080cm, inner ysep=0.085cm, rounded corners=0.020cm] at (16.125,5.625) {$3$};
\node [font=\fontsize{27.9pt}{36.2pt}\selectfont, color={rgb,255:red,23; green,23; blue,23}, text opacity=1, , inner xsep=0.080cm, inner ysep=0.085cm, rounded corners=0.020cm] at (17.375,3.125) {$4$};
\node [font=\fontsize{27.9pt}{36.2pt}\selectfont, color={rgb,255:red,23; green,23; blue,23}, text opacity=1, , inner xsep=0.080cm, inner ysep=0.085cm, rounded corners=0.020cm] at (12.5,2.25) {$2$};
\draw [ color={rgb,255:red,175; green,171; blue,171}, draw opacity=1, line width=2pt, short] (13.875,-1.5) -- (17.625,-1.5);
\draw [ fill={rgb,255:red,0; green,0; blue,0}, fill opacity=1] (17.625,-1.5) circle (0.75cm);
\node [font=\fontsize{27.9pt}{36.2pt}\selectfont, color={rgb,255:red,255; green,255; blue,255}, text opacity=1, , inner xsep=0.080cm, inner ysep=0.085cm, rounded corners=0.020cm] at (17.625,-1.5) {$5$};
\draw [ fill={rgb,255:red,0; green,0; blue,0}, fill opacity=1] (13.75,-1.5) circle (0.75cm);
\node [font=\fontsize{27.9pt}{36.2pt}\selectfont, color={rgb,255:red,255; green,255; blue,255}, text opacity=1, , inner xsep=0.080cm, inner ysep=0.085cm, rounded corners=0.020cm] at (13.75,-1.5) {$1$};
\draw [ fill={rgb,255:red,252; green,252; blue,252}, fill opacity=1, line width=2pt ] (13.75,4.75) circle (0.75cm);
\node [font=\fontsize{27.9pt}{36.2pt}\selectfont, color={rgb,255:red,23; green,23; blue,23}, text opacity=1, , inner xsep=0.080cm, inner ysep=0.085cm, rounded corners=0.020cm] at (13.75,4.75) {$3$};
\draw [ fill={rgb,255:red,252; green,252; blue,252}, fill opacity=1, line width=2pt ] (15,2.25) circle (0.75cm);
\node [font=\fontsize{27.9pt}{36.2pt}\selectfont, color={rgb,255:red,23; green,23; blue,23}, text opacity=1, , inner xsep=0.080cm, inner ysep=0.085cm, rounded corners=0.020cm] at (15,2.25) {$4$};
\node [font=\fontsize{27.9pt}{36.2pt}\selectfont, inner xsep=0.080cm, inner ysep=0.085cm, rounded corners=0.020cm] at (11.625,4.125) {$G_1$};
\node [font=\fontsize{27.9pt}{36.2pt}\selectfont, inner xsep=0.080cm, inner ysep=0.085cm, rounded corners=0.020cm] at (18.375,4.75) {$G_2$};
\node [font=\fontsize{27.9pt}{36.2pt}\selectfont, inner xsep=0.080cm, inner ysep=0.085cm, rounded corners=0.020cm] at (15.75,-0.5) {$G_3$};
\end{circuitikz}
}%
\caption{$\mathcal{P}$ in (b) can be grouped into a maximum subgraph cover: $G_1$ is obtained by grouping $P_1, P_2$ and $P_3$, $G_2$ is obtained by grouping $P4$ and $P_5$, $G_3$ comes from $P_6$.}
\label{fig:example:cover}
\end{subfigure}
\hfill
\begin{subfigure}[t]{0.22\textwidth}
\centering
\resizebox{1\textwidth}{!}{%
\begin{circuitikz}
\tikzstyle{every node}=[font=\fontsize{27.9pt}{36.2pt}\selectfont]
\draw [ fill={rgb,255:red,0; green,0; blue,0}, fill opacity=1] (11.25,7.75) circle (0.75cm);
\node [font=\fontsize{27.9pt}{36.2pt}\selectfont, color={rgb,255:red,255; green,255; blue,255}, text opacity=1, , inner xsep=0.080cm, inner ysep=0.085cm, rounded corners=0.020cm] at (11.25,7.75) {$1$};
\draw [ fill={rgb,255:red,0; green,0; blue,0}, fill opacity=1] (18.75,7.75) circle (0.75cm);
\node [font=\fontsize{27.9pt}{36.2pt}\selectfont, color={rgb,255:red,255; green,255; blue,255}, text opacity=1, , inner xsep=0.080cm, inner ysep=0.085cm, rounded corners=0.020cm] at (18.75,7.75) {$5$};
\draw [ fill={rgb,255:red,252; green,252; blue,252}, fill opacity=1, line width=2pt ] (13.125,11.5) circle (0.75cm);
\draw [ fill={rgb,255:red,252; green,252; blue,252}, fill opacity=1, line width=2pt ] (16.875,11.5) circle (0.75cm);
\draw [ fill={rgb,255:red,252; green,252; blue,252}, fill opacity=1, line width=2pt ] (15,15.25) circle (0.75cm);
\node [font=\fontsize{27.9pt}{36.2pt}\selectfont, color={rgb,255:red,23; green,23; blue,23}, text opacity=1, , inner xsep=0.080cm, inner ysep=0.085cm, rounded corners=0.020cm] at (15,15.25) {$3$};
\node [font=\fontsize{27.9pt}{36.2pt}\selectfont, color={rgb,255:red,23; green,23; blue,23}, text opacity=1, , inner xsep=0.080cm, inner ysep=0.085cm, rounded corners=0.020cm] at (16.875,11.5) {$4$};
\node [font=\fontsize{27.9pt}{36.2pt}\selectfont, color={rgb,255:red,23; green,23; blue,23}, text opacity=1, , inner xsep=0.080cm, inner ysep=0.085cm, rounded corners=0.020cm] at (13.125,11.5) {$2$};
\draw [ color={rgb,255:red,138; green,201; blue,39}, draw opacity=1, line width=2pt, -{Stealth[scale=1.5]}, ] (12,7.75) -- (17.875,7.75);
\draw [ color={rgb,255:red,255; green,148; blue,76}, draw opacity=1, line width=2pt, -{Stealth[scale=1.5]}, ] (11.375,8.5) -- (12.75,10.625);
\draw [ color={rgb,255:red,254; green,88; blue,94}, draw opacity=1, line width=2pt, {Stealth[scale=1.5]}-{Stealth[scale=1.5]}, ] (10.875,8.5) -- (12.375,10.875);
\draw [ color={rgb,255:red,254; green,88; blue,94}, draw opacity=1, line width=2pt, {Stealth[scale=1.5]}-{Stealth[scale=1.5]}, ] (13.125,12.375) -- (14.375,14.5);
\draw [ color={rgb,255:red,255; green,201; blue,60}, draw opacity=1, line width=2pt, {Stealth[scale=1.5]}-{Stealth[scale=1.5]}, ] (11.875,8.375) -- (13.25,10.625);
\draw [ color={rgb,255:red,106; green,76; blue,146}, draw opacity=1, line width=2pt, {Stealth[scale=1.5]}-{Stealth[scale=1.5]}, ] (18.75,8.625) -- (17.625,10.875);
\draw [ color={rgb,255:red,106; green,76; blue,146}, draw opacity=1, line width=2pt, {Stealth[scale=1.5]}-{Stealth[scale=1.5]}, ] (16.875,12.375) -- (15.75,14.625);
\draw [ color={rgb,255:red,255; green,147; blue,76}, draw opacity=1, line width=2pt, -{Stealth[scale=1.5]}, ] (14,11.5) -- (15.875,11.5);
\draw [ color={rgb,255:red,26; green,130; blue,196}, draw opacity=1, line width=2pt, {Stealth[scale=1.5]}-{Stealth[scale=1.5]}, ] (18,8.375) -- (16.875,10.5);
\node [font=\fontsize{27.9pt}{36.2pt}\selectfont, color={rgb,255:red,255; green,200; blue,60}, text opacity=1, , fill={rgb,255:red,255; green,255; blue,255}, fill opacity=1, text opacity=1, inner xsep=0.080cm, inner ysep=0.085cm, rounded corners=0.020cm] at (13.375,9.25) {$P_1$};
\node [font=\fontsize{27.9pt}{36.2pt}\selectfont, color={rgb,255:red,253; green,87; blue,94}, text opacity=1, , fill={rgb,255:red,255; green,255; blue,255}, fill opacity=1, text opacity=1, inner xsep=0.080cm, inner ysep=0.085cm, rounded corners=0.020cm] at (12.875,13.75) {$P_3$};
\node [font=\fontsize{27.9pt}{36.2pt}\selectfont, color={rgb,255:red,255; green,148; blue,76}, text opacity=1, , fill={rgb,255:red,255; green,255; blue,255}, fill opacity=1, text opacity=1, inner xsep=0.080cm, inner ysep=0.085cm, rounded corners=0.020cm] at (15,12.25) {$P'_2$};
\node [font=\fontsize{27.9pt}{36.2pt}\selectfont, color={rgb,255:red,107; green,76; blue,146}, text opacity=1, , fill={rgb,255:red,255; green,255; blue,255}, fill opacity=1, text opacity=1, inner xsep=0.080cm, inner ysep=0.085cm, rounded corners=0.020cm] at (17.125,13.75) {$P_5$};
\node [font=\fontsize{27.9pt}{36.2pt}\selectfont, color={rgb,255:red,26; green,130; blue,196}, text opacity=1, , fill={rgb,255:red,255; green,255; blue,255}, fill opacity=1, text opacity=1, inner xsep=0.080cm, inner ysep=0.085cm, rounded corners=0.020cm] at (16.5,9.25) {$P_4$};
\node [font=\fontsize{27.9pt}{36.2pt}\selectfont, color={rgb,255:red,138; green,202; blue,38}, text opacity=1, , fill={rgb,255:red,255; green,255; blue,255}, fill opacity=1, text opacity=1, inner xsep=0.080cm, inner ysep=0.085cm, rounded corners=0.020cm] at (14.875,8.5) {$P_6$};
\draw [ color={rgb,255:red,255; green,148; blue,76}, draw opacity=1, line width=2pt, -{Stealth[scale=1.5]}, ] (17.25,10.75) -- (18.375,8.5);
\end{circuitikz}
}%
\caption{In contrast, this set of probes only induces $2$ information orthogonal subsets of link transmissivities: $\{\eta_1, \eta_2, \eta_3, \eta_4, \eta_5\}$ and $\{\eta_6\}$.}
\label{fig:example:subopt}
\end{subfigure}
\caption{A network and two sets of probes that guarantees identifiability.}
\label{fig:example}
\end{figure}

\subsection{Connecting information orthogonality with graph properties} \label{sec:graph:connect}

Compared to the general multi-parameter estimation problems, the uniqueness of network tomography stems from the fact that the estimated parameters are graph-constrained.
Next, we establish connections between information orthogonality and graph properties, which would allow us to upper bound the maximum number of information orthogonal sets for a given network topology, and set the stage for proving the optimality of our probe construction algorithm in \S~\ref{sec:graph:algo}.

First we show that if two probes are edge-disjoint, their corresponding sets of link transmissivities are information orthogonal.
Recall that in this section, we think of a probe $P$ as a \emph{multiset} of edges.
Let $\supp(P)$ denote the underlying set of $P$, i.e., the set of distinct edges probe $P$ passes through.
\begin{claim} \label{clm:disjoint_probes}
For probes $P_1$ and $P_2$, let $H_i$ be the set of link transmissivities associated with edges in $P_i$, $H_i = \{ \eta_i \mid \forall e_i \in \supp(P_i)\}$, $i=1,2$.
$H_1$ and $H_2$ are information orthogonal if and only if $\supp(P_1) \cap \supp(P_2) = \emptyset$.
\end{claim}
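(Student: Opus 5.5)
The plan is to compute the FIM induced by the two probes $P_1$ and $P_2$ explicitly, entry by entry, and to read off its sparsity pattern. In line with the routing setting of this section (no entanglement shared across distinct probes), I take the observations $\bm{X}_{P_1}$ and $\bm{X}_{P_2}$ to be independent. Then the log-likelihood splits as a sum, and the FIM is additive: $\mathcal{I} = \mathcal{I}^{(P_1)} + \mathcal{I}^{(P_2)}$, where $\mathcal{I}^{(P)}$ is the FIM of $\bm{X}_P$ alone with respect to the full vector $\bm{\eta}$.

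\textbf{Rank-one structure of a single probe's FIM.} By \S~\ref{sec:prelim:probe}, $\bm{X}_P \sim \mathcal{N}(\bm{\mu}(\eta_P), \bm{\Sigma}(\eta_P))$ depends on $\bm{\eta}$ only through the scalar $\eta_P = \prod_{e\in P}\eta_e = \prod_{e} \eta_e^{A_{P,e}}$. Applying the chain rule to each term of \eqref{eqn:FIM} then gives
\begin{equation*}
\mathcal{I}^{(P)}_{i,j} = I_P(\eta_P)\,\frac{\partial \eta_P}{\partial \eta_i}\frac{\partial \eta_P}{\partial \eta_j},
\qquad
\frac{\partial \eta_P}{\partial \eta_i} = \frac{A_{P,i}\,\eta_P}{\eta_i}.
\end{equation*}
Here $I_P(\eta_P)$ is the single-channel FI of \S~\ref{sec:physical:single}, given by $I_{\mathrm{c}}$, $I_{\mathrm{s}}$ or $I_{\mathrm{e}}$ according to $\texttt{impl}(P)$, with the appropriate block size and copy number. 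Since all $\eta_e \in (0,1]$, the partial derivative $\partial\eta_P/\partial\eta_i$ is strictly positive when $e_i \in \supp(P)$ and zero otherwise. Moreover, each of $I_{\mathrm{c}}, I_{\mathrm{s}}, I_{\mathrm{e}}$ is strictly positive: every summand is positive because $c_n \in (0,1)$ and $\eta \in (0,1]$. Hence $\mathcal{I}^{(P)}_{i,j} > 0$ if both $e_i, e_j \in \supp(P)$, and $\mathcal{I}^{(P)}_{i,j} = 0$ otherwise. In particular all entries are nonnegative, so summing over probes can never produce a cancellation.

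\textbf{The two directions.}
\begin{itemize}
\item[\emph{If.}] Suppose $\supp(P_1)\cap\supp(P_2)=\emptyset$. Take $\eta_i \in H_1$ and $\eta_j \in H_2$. Then $e_j \notin \supp(P_1)$, so $\mathcal{I}^{(P_1)}_{i,j} = 0$. Likewise $e_i \notin \supp(P_2)$, so $\mathcal{I}^{(P_2)}_{i,j}=0$. Therefore $\mathcal{I}_{i,j}=0$.
\item[\emph{Only if.}] Suppose some edge $e_k$ lies in both supports. Then $\eta_k \in H_1$ and $\eta_k \in H_2$, and $\mathcal{I}_{k,k} \geq \mathcal{I}^{(P_1)}_{k,k} > 0$. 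This violates the definition with $i=j=k$. If one insists that $i\neq j$, take any other $e_l \in \supp(P_2)$: then $\mathcal{I}^{(P_2)}_{k,l}>0$ and nonnegativity gives $\mathcal{I}_{k,l}>0$. (If $\supp(P_1) = \supp(P_2) = \{e_k\}$, the sets are not even disjoint, so they cannot be orthogonal.)
\end{itemize}

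\textbf{Main obstacle.} The step needing the most care is the chain-rule reduction. One must check that both the mean term and the trace term of \eqref{eqn:FIM} factor through $\eta_P$. This holds because $\partial\bm{\mu}/\partial\eta_i = \bm{\mu}'(\eta_P)\,\partial\eta_P/\partial\eta_i$ and similarly for $\bm{\Sigma}$. One must also verify that the resulting scalar is exactly the positive single-channel FI, including the multiset exponents $A_{P,i}$ coming from repeated edges. Strict positivity of $I_P$ is what drives the ``only if'' direction, so I would state it explicitly from \eqref{eqn:I_e}, \eqref{eqn:I_s} and the formula for $I_{\mathrm{c}}$.
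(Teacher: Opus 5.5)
Your proof is correct, but it takes a different route from the paper's sketch. The paper argues directly from \eqref{eqn:FIM}. It orders the observations so that $\bm{\Sigma}$, and hence $\bm{\Sigma}^{-1}$, is block diagonal with one block per probe. For $\eta_i\in H_1$ and $\eta_j\in H_2$, the derivatives with respect to $\eta_i$ are supported on the $P_1$ block and those with respect to $\eta_j$ on the $P_2$ block, so both the quadratic term and the trace of the product of block-diagonal factors vanish. The converse is dismissed as ``similar reasoning.''

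You instead first establish additivity $\mathcal{I}=\mathcal{I}^{(P_1)}+\mathcal{I}^{(P_2)}$ and the rank-one factorization $\mathcal{I}^{(P)}_{i,j}=I_P\,\frac{\partial\eta_P}{\partial\eta_i}\frac{\partial\eta_P}{\partial\eta_j}$, then read off the sparsity pattern. This is exactly the decomposition the paper only derives later in \S\ref{sec:combined:proof} (see \eqref{eqn:I_ij_sum_contrib} and \eqref{eqn:I_ij_P_initial}).

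The paper's block argument for the ``if'' direction is model-agnostic. It needs neither the scalar dependence on $\eta_P$ nor positivity of $I_P$, only that distinct probes are uncorrelated and each depends only on its own edges. Your route asks for a bit more, namely strict positivity of $I_{\mathrm{c}},I_{\mathrm{s}},I_{\mathrm{e}}$, which you check. In exchange it actually proves the ``only if'' direction. Nonnegativity of every entry of every $\mathcal{I}^{(P)}$ rules out cancellation, and $I_P>0$ gives $\mathcal{I}_{k,k}>0$ for a shared edge $e_k$. This is precisely what the paper's ``similar reasoning'' leaves unverified. Your approach also avoids the sketch's conflation of parameter indices with observation indices (``a rescaling of $\bm{\Sigma}^{-1}_{i,j}$'').

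One small remark concerns the definition. The paper's definition of information orthogonality does not require $i\neq j$, so your diagonal argument already suffices on its own. The off-diagonal variant and the parenthetical about $\supp(P_1)=\supp(P_2)=\{e_k\}$ are therefore unnecessary. Under an $i\neq j$ reading, that degenerate case would be vacuously orthogonal, so the parenthetical would be appealing to a convention rather than to the FIM.
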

\begin{proofsketch}
We show that if $\supp(P_1) \cap \supp(P_2) = \emptyset$, $H_1$ and $H_2$ are information orthogonal. 
W.l.o.g., we assume the parameters are indexed such that the covariance matrix $\bm{\Sigma}$ is block diagonal, with parameters from $H_1$ in one block, and parameters from $H_2$ in another block.\footnote{Strictly speaking, we need to specify the physical implementations of the probes to write the covariance matrix $\bm{\Sigma}$. Though here it is enough to think of $\bm{\Sigma}$ abstractly, as a block diagonal matrix.}
This is possible due to the fact that distinct probes are not entangled by assumption, so there is no correlation between $P_1$ and $P_2$.
Then $\bm{\Sigma}^{-1}$ is also block diagonal, with parameters from $H_1$ in one block, and parameters from $H_2$ in another block.
For any $\eta_i \in H_1$ and any $\eta_j \in H_2$, one can check that similar to~\eqref{eqn:I_{e,d}_{i,j}_term1}, the first term in $\mathcal{I}_{i,j}$~\eqref{eqn:FIM} is a rescaling of $\bm{\Sigma}^{-1}_{i,j}$.
Since $\eta_i$ and $\eta_j$ are located in separate blocks, $\bm{\Sigma}^{-1}_{i,j} = 0$.
For the second term in $\mathcal{I}_{i,j}$, it is not hard to check that the matrix multiplication of the chain of block diagonal matrices yields a zero matrix.

For the other direction, a similar reasoning gives that if $\supp(P_1) \cap \supp(P_2) \neq \emptyset$, $H_1$ and $H_2$ are no longer information orthogonal.
\end{proofsketch}

Given a set of probes $\mathcal{P}$ on a graph $G$, we can group subsets of $\mathcal{P}$ into subgraphs of $G$ by taking all vertices and edges of the probes.
Denote $V(P)$ as the set of vertices a probe $P$ passes through, including its two endpoints. 
Grouping probes $P_1, P_2, \dots, P_k$ yields a subgraph $G_1 = (V(G_1), E(G_1))$, where $V(G_1) = \bigcup_{i=1}^k V(P_i)$, and $E(G_1) = \bigcup_{i=1}^k \supp(P_i)$.
A direct consequence of Claim~\ref{clm:disjoint_probes} is that, if two subgraphs are edge-disjoint, their corresponding sets of link transmissivities are information orthogonal.
\begin{corollary}
For subgraphs $G_1 = (V(G_1), E(G_1))$ and $G_2 = (V(G_2), E(G_2))$, let $H_i$ be the set of link transmissivities associated with edges in $G_i$, $H_i = \{ \eta_i \mid \forall e_i \in E(G_i)\}$, $i=1,2$.
$H_1$ and $H_2$ are information orthogonal if and only if $E(G_1) \cap E(G_2) = \emptyset$.
\end{corollary}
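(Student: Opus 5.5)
The plan is to reduce the corollary to Claim~\ref{clm:disjoint_probes} by rewriting it edge by edge. Say $G_1$ is obtained by grouping probes $P_1,\dots,P_k$ and $G_2$ by grouping $P'_1,\dots,P'_l$. Then $E(G_1)=\bigcup_{a}\supp(P_a)$ and $E(G_2)=\bigcup_{b}\supp(P'_b)$. By definition, $H_1$ and $H_2$ are information orthogonal exactly when every cross entry $\mathcal{I}_{i,j}$ with $\eta_i\in H_1$, $\eta_j\in H_2$ vanishes. So the task is to relate these individual entries to edge-disjointness.

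For the ``if'' direction, assume $E(G_1)\cap E(G_2)=\emptyset$. Then for every pair $(a,b)$ we have $\supp(P_a)\cap\supp(P'_b)=\emptyset$. I would rerun the argument in the proof sketch of Claim~\ref{clm:disjoint_probes} at the level of the whole probe set. Since distinct probes are uncorrelated, $\bm{\Sigma}$ is block diagonal with one block per probe. Each block, and the corresponding part of $\bm{\mu}$, depends only on the parameters of edges in that probe's support.

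Now fix $\eta_i\in H_1$ and $\eta_j\in H_2$. Every probe depends on at most one of the two, because no probe uses an edge of both $G_1$ and $G_2$ through the grouping. One caveat: other probes outside the grouping could touch both edges. I would handle this by taking the probe set to be exactly the probes grouped into $G_1$ and $G_2$, which is the setting of the claim.

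Under that setting, each block of $\partial\bm{\mu}/\partial\eta_i$ is zero wherever the block of $\partial\bm{\mu}/\partial\eta_j$ is nonzero. So the first term $\frac{\partial\bm{\mu}^T}{\partial\eta_i}\bm{\Sigma}^{-1}\frac{\partial\bm{\mu}}{\partial\eta_j}$ vanishes, because $\bm{\Sigma}^{-1}$ is block diagonal. The same reasoning applies to $\bm{\Sigma}^{-1}\frac{\partial\bm{\Sigma}}{\partial\eta_i}\bm{\Sigma}^{-1}\frac{\partial\bm{\Sigma}}{\partial\eta_j}$: it is a product of block-diagonal matrices whose $i$- and $j$-derivative factors have disjoint nonzero blocks, so it is zero and its trace vanishes. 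Hence $\mathcal{I}_{i,j}=0$.

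For the ``only if'' direction, suppose some edge $e$ lies in $E(G_1)\cap E(G_2)$. Then $\eta_e$ belongs to both $H_1$ and $H_2$. Information orthogonality would require the corresponding diagonal entry $\mathcal{I}_{e,e}$ to vanish. I would rule that out by showing $\mathcal{I}_{e,e}>0$. Some probe $P$ contains $e$, and its mean $\propto\sqrt{\eta_P}$ has nonzero derivative in $\eta_e$, because $\eta_P=\prod_{e'\in P}\eta_{e'}$ with all $\eta_{e'}\in(0,1]$. This makes the first, positive semidefinite term strictly positive. Alternatively, one can invoke the ``only if'' half of Claim~\ref{clm:disjoint_probes} on a pair of grouped probes sharing $e$, noting that the FIM entry for the shared parameter is unaffected by the grouping.

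The main obstacle is definitional rather than computational. The corollary's statement is ambiguous when $H_1\cap H_2\neq\emptyset$ and when other probes in $\mathcal{P}$ couple the two edge sets. I would make explicit that orthogonality is evaluated over the probes constituting $G_1$ and $G_2$. Once that is fixed, the block-diagonal bookkeeping above is routine.
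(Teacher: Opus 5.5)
Your proposal is correct and follows essentially the paper's route: the paper simply calls the corollary a direct consequence of Claim~\ref{clm:disjoint_probes}, whose block-diagonal argument you rerun over the grouped probes, and your explicit $\mathcal{I}_{e,e}>0$ argument for the converse makes concrete what the paper leaves as ``a similar reasoning.'' Your caveat is well taken, and you need not restrict the probe set to just the two groups. The ``if'' direction only needs that no probe in $\mathcal{P}$ meets both $E(G_1)$ and $E(G_2)$, and this already holds in the paper's actual setting of a subgraph cover, where every probe belongs to one group and all groups are pairwise edge-disjoint.
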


Then, to maximize the number of information orthogonal sets of parameters, we just need to maximize the number of edge-disjoint subgraphs in $G$.
Note that these are not meant to be arbitrary subgraphs of $G$.
Each subgraph should be obtained by grouping some subset of the probes.
This gives rise to the definition of the subgraph cover.
\begin{definition}[Subgraph cover] \label{def:subgraph_cover}
Given a graph $G = (V(G),E(G))$ with monitors $M \subseteq V$, a set of connected subgraphs $\{G_i = (V(G_i), E(G_i))\}_{i=1}^m$ is a \emph{subgraph cover} of $G$ if
\begin{enumerate}[1)]
\item \label{cdt:subgraph_cover:nonempty}
$E(G_i) \neq \emptyset$ and $V(G_i) \cap M \neq \emptyset$, for all $i \in [m]$ (every subgraph has at least an edge and a monitor);
\item \label{cdt:subgraph_cover:edge-disj}
$E(G_i) \cap E(G_j) = \emptyset$ for all $i \neq j \in [m]$ (subgraphs are edge-disjoint); and 
\item \label{cdt:subgraph_cover:cover}
$\bigcup_{i=1}^{m} E(G_i) = E(G)$ (all edges in $G$ are covered). 
\end{enumerate}
$\{G_i = (V(G_i), E(G_i))\}_{i=1}^m$ is a \emph{maximum subgraph cover} of $G$ if $m$ is maximized.
\end{definition}

To upper bound $m$, let $\deg(S)$ be the number of edges with exactly one endpoint in $S$, $\deg(S) = \sizeof{\{(u,v) \in E \mid u \in S, v \notin S\}}$.
Denote $E(S)$ as the set of edges with both endpoints in $S$, $E(S) = \{(u,v) \in E \mid u,v \in S\}$.
\begin{fact} \label{fact:UB_m}
For any subgraph cover $\{G_i = (V(G_i), E(G_i))\}_{i=1}^m$ of a graph $G$ with monitors $M$, $m \leq \deg(M) + \sizeof{E(M)}$.
\end{fact}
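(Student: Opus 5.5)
We charge each subgraph in the cover to a distinct edge incident to the monitor set $M$, that is, an edge in $\delta(M)\cup E(M)$, where $\delta(M)=\{(u,v)\in E \mid u\in M, v\notin M\}$ and $E(M)$ is as defined above. Since $\sizeof{\delta(M)}=\deg(M)$ and $\delta(M)\cap E(M)=\emptyset$, the set $\delta(M)\cup E(M)$ has exactly $\deg(M)+\sizeof{E(M)}$ elements. An injective map from $[m]$ into it therefore gives $m \leq \deg(M) + \sizeof{E(M)}$.

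\textbf{Step 1: every subgraph has an edge touching $M$.} Fix $G_i$. By condition~\ref{cdt:subgraph_cover:nonempty} of Definition~\ref{def:subgraph_cover}, $G_i$ contains a monitor $x\in V(G_i)\cap M$ and at least one edge. Since $G_i$ is connected and has an edge, every vertex of $G_i$, in particular $x$, is incident to some edge of $E(G_i)$. If $x$ were isolated in $G_i$, then $G_i$ would have at least two vertices (because it has an edge) and would be disconnected.

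So pick an edge $e_i\in E(G_i)$ with endpoint $x$. Its other endpoint is either in $M$, giving $e_i\in E(M)$, or outside $M$, giving $e_i\in\delta(M)$. Either way $e_i\in\delta(M)\cup E(M)$.

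\textbf{Step 2: injectivity.} Define $\phi(i)=e_i$. By condition~\ref{cdt:subgraph_cover:edge-disj}, the edge sets $E(G_i)$ are pairwise disjoint. Since $e_i\in E(G_i)$, the map $\phi$ is injective, and the bound follows. Condition~\ref{cdt:subgraph_cover:cover} is not needed for this inequality.

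The only delicate point is the degenerate case in Step 1: a subgraph whose monitor is not an endpoint of any of its edges. This is ruled out by connectivity together with nonemptiness of the edge set, so I expect no real obstacle.
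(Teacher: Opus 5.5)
Your proof is correct and follows essentially the same route as the paper: both charge each subgraph to an edge incident to $M$ and use the edge-disjointness condition to ensure no such edge is charged twice. Your version is more careful, since you explicitly use connectivity of each $G_i$ to guarantee that the monitor in $V(G_i)\cap M$ is actually an endpoint of some edge in $E(G_i)$; the paper's terser proof leaves this step implicit and attributes it to the nonemptiness condition alone.
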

\begin{proof}
Consider all edges in $E(G)$ that are incident to at least one vertex in $M$, there are $\deg(M) + \sizeof{E(M)}$ number of such edges.
Due to Condition~\ref{cdt:subgraph_cover:nonempty}, each such edge contributes at most one subgraph, while edges not incident to $M$ cannot create any new subgraph. 
It follows that $m \leq \deg(M) + \sizeof{E(M)}$.
\end{proof}
If a set of probes \emph{can} be grouped into a maximum subgraph cover, it effectively maximizes the number of information orthogonal subsets of link transmissivities, thus addressing Problem~\ref{prob}(ii).
The subtlety lies in whether the upper bound on $m$ (Fact~\ref{fact:UB_m}) is tight and, if so, whether there exists a set of probes that attains it.
The algorithm we introduce next in \S~\ref{sec:graph:algo} answers both questions in the affirmative.

\subsection{The probe construction algorithm} \label{sec:graph:algo}

\paragraph{Algorithm.}
Given a graph $G$ and a set of monitors $M$, we iteratively construct one probe for each edge in $E(G)$.
While we are not explicitly trying to minimize the length of the probes in this work, restricting our attention to the shortest \emph{loop-back} probes in most iterations is helpful in two respects.
As noted in~\cite{zheng2025quantum}, quantum probes are inherently sensitive to the distance they travel, it is therefore always desirable to use shorter probes whenever possible.
Moreover, a shortest loop-back probe passes through the least number of distinct edges possible (when $G$ is unweighted). 
A set of shorter probes are less likely to intersect on many edges, which results in better chances of maximizing information orthogonality.

Algorithm~\ref{algo:FindProbe} describes $\textsc{FindProbe}(G, M)$, our probe construction algorithm. 
For each edge incident to at least one monitor, we simply construct a probe that only passes through that edge (Steps~\ref{algo:FindProbe:step:both_incident_s}-\ref{algo:FindProbe:step:one_incident_e}).
For each edge that is further away from the monitors, similar in spirit to~\cite[Algorithm 3]{zheng2025quantum}, we use all-pair shortest paths to generate the shortest loop-back probe that passes through the edge (Steps~\ref{algo:FindProbe:step:nonincident_s}-\ref{algo:FindProbe:step:nonincident_e}).
To obtain the all-pair shortest paths (Step~\ref{algo:FindProbe:step:FW}), we opt for a version of the Floyd-Warshall algorithm (e.g., \S~25.2 in~\cite{cormen2022introduction}), which not only returns the all-pair shortest distances, but also allows the construction of the shortest paths using a predecessor array.
The notation $u_1 \rightsquigarrow u_2 \rightsquigarrow \cdots u_i \rightsquigarrow u_{i+1} \rightsquigarrow \cdots \rightsquigarrow u_k$ refers to the concatenation of the shortest paths from $u_i$ to $u_{i+1}$, $i\in [k-1]$.

\begin{algorithm}
\caption{$\textsc{FindProbe}(G, M)$}
\label{algo:FindProbe}
\textbf{Input}: A graph $G=(V(G),E(G))$, and a set of monitors $M$; \\
\textbf{Output}: A set of distinct probes $\mathcal{P}$.
\begin{algorithmic}[1]
\State Run Floyd-Warshall with path construction on $G = (V, E)$, obtain distance matrix $D$ and predecessor array $T$
\Comment{$D(u,v)$ is the shortest distance from $u$ to $v$, and $T(u,v)$ gives the penultimate vertex on the shortest path from $u$ to $v$, both in $G$.}
\label{algo:FindProbe:step:FW}
\State $\mathcal{P} \gets \emptyset$
\For{$(u,v) \in E$}
    \If{both $u,v \in M$} \label{algo:FindProbe:step:both_incident_s}
        \State Construct $P$ to be $u \rightsquigarrow v$ \label{algo:FindProbe:step:both_incident_e}
    \ElsIf{Exactly one of $u,v \in M$ (w.l.o.g. say $u \in M$)} \label{algo:FindProbe:step:one_incident_s}
        \State Construct $P$ to be $u \rightsquigarrow v \rightsquigarrow u$ \label{algo:FindProbe:step:one_incident_e}
    \Else
        \State $m_u \gets \argmin_{m \in M} D(m,u)$ \label{algo:FindProbe:step:nonincident_s}
        \State $m_v \gets \argmin_{m \in M} D(v,m)$ \label{algo:FindProbe:step:nonincident_mv}
        \State Let $P$ be the shorter of $m_u \rightsquigarrow u \rightsquigarrow v \rightsquigarrow u \rightsquigarrow m_u$ and $m_v \rightsquigarrow v \rightsquigarrow u \rightsquigarrow v \rightsquigarrow m_v$ (If $D(m_u, u) = D(m_v, v)$, let $P$ be $m_u \rightsquigarrow u \rightsquigarrow v \rightsquigarrow u \rightsquigarrow m_u$ if and only if $m_u$ is lexicographically smaller than $m_v$) \label{algo:FindProbe:step:nonincident_e}
    \EndIf
    \State Add $P$ to $\mathcal{P}$
\EndFor
\State \Return $\mathcal{P}$
\end{algorithmic}
\end{algorithm}

\paragraph{Correctness.}
Next we show that Algorithm~\ref{algo:FindProbe} correctly outputs a solution to Problem~\ref{prob}.
We start by showing that the probes generated by Algorithm~\ref{algo:FindProbe} guarantee identifiability.
\begin{lemma}
Given a graph $G$ with monitors $M$ and unknown link transmissivities $\bm{\eta} = (\eta_1, \eta_2, \dots, \eta_n)$, $\bm{\eta}$ is identifiable from $\mathcal{P} = \textsc{FindProbe}(G, M)$.
\end{lemma}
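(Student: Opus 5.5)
By the discussion of identifiability in \S~\ref{sec:graph:setup}, it suffices to show that the measurement matrix $A$ induced by $\mathcal{P} = \textsc{FindProbe}(G,M)$ has full column rank. The algorithm builds exactly one probe $P_e$ for each edge $e \in E$, so $A$ is a square $\sizeof{E} \times \sizeof{E}$ matrix, and we only need to show it is nonsingular. The plan is to order the edges so that $A$ becomes lower triangular with nonzero diagonal. Equivalently, we can show directly that the unknowns $z_e = \log \eta_e$ can be solved for one at a time.

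The ordering is by the quantity $d(e) = \min_{m \in M} \min\{D(m,u), D(m,v)\}$ for $e=(u,v)$, which is $0$ exactly when $e$ touches a monitor. We sort edges by nondecreasing $d(e)$ and break ties arbitrarily. The key claim is that for each edge $e$ the probe $P_e$ traverses $e$ a positive number of times, and every other edge it traverses has strictly smaller $d$-value.

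\emph{Edges touching monitors.} If both endpoints of $e$ are monitors, then $P_e$ is the single-edge path $u \rightsquigarrow v$. This holds because the shortest path between adjacent vertices is the edge itself, assuming $G$ is simple and unweighted. Its row is then the unit vector $\mathbf{1}_e$. If exactly one endpoint is a monitor, $P_e$ is $u \to v \to u$ and its row is $2\cdot\mathbf{1}_e$. In both cases the diagonal entry is nonzero and there are no other entries.

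\emph{Edges not touching monitors.} Suppose w.l.o.g.\ that the algorithm chose $m_u \rightsquigarrow u \rightsquigarrow v \rightsquigarrow u \rightsquigarrow m_u$. Its row equals $2\cdot\mathbf{1}_e$ plus twice the indicator of the edges on a shortest $m_u$--$u$ path $\pi$. We must check two things.
\begin{enumerate}[(a)]
\item $e \notin \pi$. This requires $D(m_u,u) \le D(m_v,v)$, which is exactly the tie-break rule in Step~\ref{algo:FindProbe:step:nonincident_e}.
\item Every edge $e'$ on $\pi$ has $d(e') < d(e)$. Here $d(e) = \min\{D(m_u,u), D(m_v,v)\} = D(m_u,u) =: k$, and $d(e') \le k-1$ because $e'$ lies on a shortest path of length $k$ from a monitor.
\end{enumerate}

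The main obstacle is part (a) together with getting $d(e)$ right. If $v$ lay on $\pi$, then $D(m_u,v) < D(m_u,u)$, hence $D(m_v,v) < D(m_u,u)$. This contradicts the choice of the shorter loop-back probe. I would therefore identify $d(e)$ with $D(m_u,u)$ only after using this minimality, and handle the lexicographic tie case separately: there $D(m_u,u)=D(m_v,v)$, so $v \notin \pi$ still holds by the same strict-inequality argument.

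With the key claim established, $A$ is lower triangular in the chosen order with diagonal entries in $\{1,2\}$. Hence $\det A \neq 0$, $A$ has full column rank, and $\bm{\eta}$ is identifiable.
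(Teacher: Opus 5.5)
Your proof is correct and follows essentially the same route as the paper: the paper inducts on probe length, which equals $2d(e)+2$ for the loop-back probes, so that each new probe contains exactly one not-yet-identified edge and $A$ is in row echelon form — this is your lower-triangular ordering by $d(e)$. Your version is more careful than the paper's sketch in verifying that the shortest $m_u$--$u$ path avoids $v$ and that all other edges on the probe are strictly closer to $M$; the only nit is that the return leg $u \rightsquigarrow m_u$ built from the predecessor array need not be the reverse of $\pi$, but it is another shortest path of length $k$, so your (a) and (b) apply to it verbatim.
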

\begin{proofsketch}
We sketch a proof using induction on the length of the probes generated by $\textsc{FindProbe}(G, M)$.
For the base case, we have probes of length $1$ or $2$ (from Steps~\ref{algo:FindProbe:step:both_incident_s}-\ref{algo:FindProbe:step:one_incident_e}).
These probes pass through exactly $1$ edge incident to $M$.
The transmissivity of each such edge $(u_1,v_1)$ is identifiable from the probe that only passes through $(u_1,v_1)$.
Assuming all edges on probes of length at most $2k$ are identifiable for some positive integer $k$, we have that on a probe $P$ with length $2(k+1)$, the transmissivities of all but one edge $(u_2, v_2)$ are identifiable by the induction hypothesis, with $(u_2, v_2)$ being the edge furthest away from $M$.
Then the observation from $P$, together with the other transmissivities that are identifiable on $P$, suffice to identify the transmissivity of $(u_2, v_2)$.

In fact, the induction argument mirrors the steps of Gaussian elimination applied to the measurement matrix $A$ associated with $\mathcal{P}$.
At the end of the induction, up to relabeling of the probes, $A$ is in row echelon form, with exactly $n$ nonzero pivots.
We can then conclude that $A$ has full column rank, thus making $\bm{\eta}$ identifiable from $\mathcal{P}$.
\end{proofsketch}

To demonstrate $\mathcal{P}=\textsc{FindProbe}(G, M)$ also maximizes the number of information orthogonal subsets of link transmissivities, we construct a sequence of subgraphs $\{G_i\}_{i=1}^{m^*}$ from $\mathcal{P}$, check that $\{G_i\}_{i=1}^{m^*}$ is a subgraph cover of $G$, and prove that $\{G_i\}_{i=1}^{m^*}$ is maximum by attaining the upper bound in Fact~\ref{fact:UB_m}.
Since $\mathcal{P}$ can be grouped into a maximum subgraph cover, it follows that $\mathcal{P}$ satisfies satisfies Problem~\ref{prob}(ii).

\begin{lemma}
Given a graph $G$ with monitors $M$ and unknown link transmissivities $\{\eta_1, \eta_2, \dots, \eta_n\}$, $\mathcal{P} = \textsc{FindProbe}(G, M)$ maximizes the number of information orthogonal subsets of $\{\eta_1, \eta_2, \dots, \eta_n\}$.
\end{lemma}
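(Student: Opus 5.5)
The plan is to exhibit, from $\mathcal{P}=\textsc{FindProbe}(G,M)$, a subgraph cover with exactly $m^*=\deg(M)+\sizeof{E(M)}$ members. By Fact~\ref{fact:UB_m} such a cover is maximum, and by the Corollary to Claim~\ref{clm:disjoint_probes} the edge-disjoint groups give $m^*$ pairwise information orthogonal subsets. No set of probes can do better, since any grouping of probes into information orthogonal edge sets is itself a subgraph cover.

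\textbf{Grouping.} Index the groups by the edges incident to $M$; there are exactly $\deg(M)+\sizeof{E(M)}$ such edges. For an edge $f=(u,v)$ with $u\in M$, let $G_f$ be the union of the probes whose \emph{first} edge (the edge leaving the monitor) is $f$. For an edge in $E(M)$ this is its own single-edge probe. Every probe produced in Steps~\ref{algo:FindProbe:step:both_incident_s}--\ref{algo:FindProbe:step:nonincident_e} starts at a monitor, so its first edge is incident to $M$, and every probe lands in exactly one group. Moreover, the probe constructed for $f$ itself lies in $G_f$. Hence each $G_f$ is nonempty, connected (a union of walks through $f$), and contains a monitor. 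Together the groups cover $E(G)$, because each edge has its own probe. This establishes Conditions~\ref{cdt:subgraph_cover:nonempty} and~\ref{cdt:subgraph_cover:cover}.

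\textbf{Edge-disjointness.} Condition~\ref{cdt:subgraph_cover:edge-disj} is the main obstacle: two probes with different first edges must not share any edge. For a non-incident edge, the probe is $m\rightsquigarrow x\rightsquigarrow y\rightsquigarrow x\rightsquigarrow m$, where the path $m\rightsquigarrow x$ comes from the Floyd–Warshall predecessor array $T$. The key observation is that all shortest paths read off from $T$ toward a common endpoint form a consistent shortest-path tree. I would therefore reorient the construction as a shortest-path forest rooted at the monitors, built from the predecessor pointers with the deterministic tie-breaking of Step~\ref{algo:FindProbe:step:nonincident_e}. Each probe is then a tree path from a root to some vertex, plus one extra edge. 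Probes whose first edges differ lie in different branches of the forest, and branches are edge-disjoint, so their tree edges are disjoint.

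The remaining difficulty concerns the extra non-tree edge $(x,y)$ of a probe. This edge belongs only to the group selected by the tie-break, and I must check that no other probe uses it as a tree edge. This is where I would spend the most care. I may need to argue that the groups whose edges do collide can be merged, and then show that merging never reduces the count below $m^*$. Alternatively, I may need to restate the grouping so that each edge is assigned to the branch of its nearer endpoint, which is consistent with Steps~\ref{algo:FindProbe:step:nonincident_s}--\ref{algo:FindProbe:step:nonincident_e}.

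If the counting comes out exactly right, then $m^*=\deg(M)+\sizeof{E(M)}$, the upper bound of Fact~\ref{fact:UB_m} is attained, and the lemma follows.
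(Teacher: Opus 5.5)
Your grouping by first edge is the same as the paper's: each probe from Step~\ref{algo:FindProbe:step:nonincident_e} is attached to the unique single-edge probe of Step~\ref{algo:FindProbe:step:one_incident_e} that it contains, which is its first edge. The counting via Fact~\ref{fact:UB_m} is also fine. The gap is that Condition~\ref{cdt:subgraph_cover:edge-disj}, the real content of the lemma, is never proved.

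Your sentence ``branches of the forest are edge-disjoint'' presupposes that the union of the monitor-to-vertex legs is a forest, and this is not automatic. The legs are read from different rows of the predecessor array, so they come from different per-monitor shortest-path trees, and those trees overlap. Which root a vertex receives is fixed by the $\argmin$ in Steps~\ref{algo:FindProbe:step:nonincident_s}--\ref{algo:FindProbe:step:nonincident_mv}, whose tie-breaking you never address. The tie-break of Step~\ref{algo:FindProbe:step:nonincident_e} that you cite only decides which endpoint of the probed edge is used. You then leave the extra edge explicitly open. Your first fallback cannot work: the groups are indexed by the $m^*$ monitor-incident edges and each merge lowers the count by one, so a single collision already falls below $m^*$. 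Your second fallback merely restates the grouping you already have.

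The missing idea is a monotonicity argument. Let $D_M(w)=\min_{m\in M}D(m,w)$.
\begin{enumerate}[(i)]
\item If $w$ lies on the leg from $m_y$ to $y$, then $D(m_y,w)=D_M(w)$, since a monitor closer to $w$ would also be closer to $y$. Hence $D_M$ strictly increases along every leg, and a leg traverses any of its edges from the lower-$D_M$ endpoint to the higher one.
\item Every monitor closest to such a $w$ is also closest to $y$, because $D(m',y)\le D(m',w)+D(w,y)=D(m_y,y)$. So if you impose a consistent tie-break in the $\argmin$ (say, lexicographic), then $m_w=m_y$.
\end{enumerate}

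Now take two legs sharing an edge $(a,b)$ with $D_M(a)<D_M(b)$. By (ii), both have root $m_b$, so both lie in the single tree defined by row $m_b$ of the predecessor array. Each therefore contains the tree path from $m_b$ to $b$, and they have the same first edge.

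For the extra edge $(x,x')$ of the probe built for it, Step~\ref{algo:FindProbe:step:nonincident_e} gives $D_M(x)\le D_M(x')$. If this edge occurs as a leg edge of another probe, (i) forces the traversal $x\to x'$, so that leg passes through $x$. By (ii), its prefix up to $x$ is exactly the first probe's leg, so again the first edges agree.

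With this argument your grouping is a subgraph cover of size $m^*$, and the rest of your proposal goes through. The paper reaches the same conclusion by contradiction, comparing $D(m_u,u)$ with $D(m_v,v)$ and relying on consistent tie-breaking.
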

\begin{proof}
We start by constructing a sequence of subgraphs $\{G_i\}_{i=1}^{m^*}$ from $\mathcal{P}$.
Let $\mathcal{P}_1, \mathcal{P}_2$ and $\mathcal{P}_3$ be the sets of probes generated in Steps~\ref{algo:FindProbe:step:both_incident_e}, \ref{algo:FindProbe:step:one_incident_e} and~\ref{algo:FindProbe:step:nonincident_e} of Algorithm~\ref{algo:FindProbe}, respectively.
Clearly, $\mathcal{P}_1 \cup \mathcal{P}_2 \cup \mathcal{P}_3 = \mathcal{P}$.
\begin{enumerate}[(a)]
\item \label{constr:initial}
For each $P_i \in \mathcal{P}_1 \cup \mathcal{P}_2$, $\supp(P_i)$ is an edge incident to some vertex in $M$.
Define subgraph $G_i$ for each $P_i$, where $V(G_i) = V(P_i)$, and $E(G_i) = \supp(P_i)$.
\item \label{constr:update}
For each $P_j \in \mathcal{P}_3$, by Step~\ref{algo:FindProbe:step:nonincident_e}, there exists a unique $P_{i^*} \in \mathcal{P}_2$, such that $\supp(P_{i^*}) \subseteq \supp(P_j)$.
Then add $P_j$ to $G_{i^*}$, that is, update $V(G_{i^*}) \gets V(G_{i^*}) \cup V(P_j)$, and $E(G_{i^*}) \gets E(G_{i^*}) \cup \supp(P_j)$.
\end{enumerate}

Next we verify that $\{G_i\}_{i=1}^{m^*}$ is indeed a subgraph cover.
Condition~\ref{cdt:subgraph_cover:nonempty} in Definition~\ref{def:subgraph_cover} is guaranteed to hold by~\ref{constr:initial}.
Condition~\ref{cdt:subgraph_cover:cover} is satisfied by the fact that Algorithm~\ref{algo:FindProbe} iteratively generates one probe $P$ for each edge $(u,v) \in E(G)$, and $(u,v) \in \supp(P)$.
To prove that Condition~\ref{cdt:subgraph_cover:edge-disj} is also satisfied, suppose for contradiction that there exists subgraphs $G_1$ and $G_2$ in $\{G_i\}_{i=1}^{m^*}$, such that $E(G_1) \cap E(G_2) \neq \emptyset$.
There must exist $P_1$ from the construction of $G_1$, and $P_2$ from the construction of $G_2$, such that some edge $(u,v) \in \supp(P_1) \cap \supp(P_2)$.
Let $m_u$ and $m_v$ be the monitors on $P_1$ and $P_2$, respectively.
By the construction of subgraphs, we have that $m_u \neq m_v$.
W.l.o.g., assume $m_u$ is the monitor closer to $u$, and $m_v$ the monitor closer to $v$.
Then $P_1$ corresponds to $m_u \rightsquigarrow u \rightsquigarrow v \rightsquigarrow \cdots \rightsquigarrow v \rightsquigarrow u \rightsquigarrow m_u$, and $P_2$ corresponds to $m_v \rightsquigarrow v \rightsquigarrow u \rightsquigarrow \cdots \rightsquigarrow u \rightsquigarrow v \rightsquigarrow m_v$.
In constructing $P_1$, Steps~\ref{algo:FindProbe:step:nonincident_s}-\ref{algo:FindProbe:step:nonincident_e} ensures that $D(m_u, u) \leq D(m_v, v)$.
Similarly from $P_2$, we have $D(m_u, u) \geq D(m_v, v)$.
If $D(m_u, u) \neq D(m_v, v)$, we have a contradiction.
When $D(m_u, u) = D(m_v, v)$, Step~\ref{algo:FindProbe:step:nonincident_e} indicates that we always breaks ties consistently.
That is, either $P_1$ and $P_2$ both pass through $m_u \rightsquigarrow u$, or they both pass through $m_v \rightsquigarrow v$.
This contradicts the fact that $m_u \neq m_v$.

Finally, observe that only~\ref{constr:initial} created $\sizeof{\mathcal{P}_1 \cup \mathcal{P}_2}$ subgraphs, while~\ref{constr:update} updated the existing subgraphs.
Therefore, we have that $\{G_i\}_{i=1}^{m^*}$ is a maximum subgraph cover with $m^* = \deg(M) + \sizeof{E(M)}$.
\end{proof}

\paragraph{Time complexity.}
Suppose the given graph $G$ has $\sizeof{V}$ vertices, $\sizeof{E}$ edges, and $\sizeof{M}$ monitors. 
Running Floyd-Warshall on $G$ (Step~\ref{algo:FindProbe:step:FW}) takes time $O(\sizeof{V}^3)$.
For each edge incident to $M$, it only takes constant time to construct a probe (Steps~\ref{algo:FindProbe:step:both_incident_s}-\ref{algo:FindProbe:step:one_incident_e}).
These edges together contribute $O(\deg(M) + \sizeof{E(M)})$ time.
For each of the rest of the $\sizeof{E} - (\deg(M) + \sizeof{E(M)})$ edges that is further away from $M$, locating monitors closest to each endpoint (Steps~\ref{algo:FindProbe:step:nonincident_s}-\ref{algo:FindProbe:step:nonincident_mv}) takes time $O(\sizeof{M})$.
Constructing a probe (Step~\ref{algo:FindProbe:step:nonincident_e}) requires at most $O(\sizeof{V})$ time.
The time it takes to report $n$ probes is clearly subsumed by previous steps of the algorithm.
The overall time complexity of $\textsc{FindProbe}(G, M)$ adds up to $O(\sizeof{V}^3 + \sizeof{V} \cdot \sizeof{E} - \sizeof{V} \cdot (\deg(M) + \sizeof{E(M)}))$.
Even though the Floyd-Warshall algorithm dominates the runtime, it is clear that the more monitors we have, the more shorter probes we can construct, and the faster is Algorithm~\ref{algo:FindProbe}.

%% file: general.tex
\section{Quantum improvement in a network}
\label{sec:combined}

Given two sets of probes $\mathscr{P}$ and $\mathscr{P}'$, each capable of identifying all links in a network, which set better estimates the link transmissivities?
To address this question, we consider the FIM induced by a set of probes in the general network case, and characterize the determinant of the FIM, and the trace of its inverse.
Recall that each probe $(P, \verb|impl|(P), t(P), c(P)) \in \mathscr{P}$ corresponds to a channel with FI $I_P$.
In \S~\ref{sec:combined:lemma}, we shall see that both the determinant of the FIM and the trace of its inverse relate neatly to the channel FI of the individual probes.
Finally, we show why the two metrics admit particularly clean closed-forms in \S~\ref{sec:combined:proof}.

\subsection{The two metrics for a general network} \label{sec:combined:lemma}

Our main result below provides explicit expressions for the the determinant of the FIM, and the trace of its inverse:

\begin{lemma} \label{lem:det_FIM_graph}
Given a set of probes $\mathscr{P} = \{(P_i, \verb|impl|(P_i), t(P_i), c(P_i))\}_{i=1}^n$ that identifies all link transmissivities $\{\eta_1, \eta_2, \dots, \eta_n\}$ of a graph $G$, $\mathscr{P}$ induces a measurement matrix $A$ and a FIM $\mathcal{I}$.
Then 
\begin{align}
& \det(\mathcal{I}) 
= \left( \prod_{i=1}^n \eta_i^{-2} \right) \cdot \det(A)^2 \cdot \left( \prod_{i=1}^n c(P_i) \cdot \eta_{P_i}^2 I_{P_i} \right), \label{eqn:det_FIM} \\
& \Tr(\mathcal{I}^{-1})
= \sum_{i=1}^n \eta_i^2 \sum_{j=1}^n \frac{(A^{-1})_{i,j}^2}{c(P_j)\cdot \eta_{P_j}^2 I_{P_j}}, \label{eqn:tr_inverse}
\end{align}
where $\eta_{P_i}$ and $I_{P_i}$ are the channel transmissivity and the FI corresponding to the probe $(P_i, \verb|impl|(P_i), t(P_i), c(P_i))$.
\end{lemma}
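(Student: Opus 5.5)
The plan is a reparametrization argument. Each probe's observation depends on $\bm{\eta}$ only through its channel transmissivity $\eta_{P_j}$, and distinct probes are uncorrelated, so I would first express the FIM in terms of the $n$ channel transmissivities and then change variables. The number of probes equals the number of links, and identifiability makes $A$ full column rank, so $A$ is an invertible $n\times n$ matrix. That is why $\det(A)$ and $A^{-1}$ make sense in the statement.

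\emph{Step 1: block-diagonal Fisher information in channel variables.} Let $\bm{\theta}=(\eta_{P_1},\dots,\eta_{P_n})$. The covariance matrix of the concatenated observation $\bm{X}$ is block diagonal, one block per probe, since probes are not entangled with each other. As in the proof of Claim~\ref{clm:disjoint_probes}, the FIM with respect to $\bm{\theta}$ is then diagonal. Its $j$-th entry is the FI of $c(P_j)$ independent copies of a single-channel experiment with transmissivity $\eta_{P_j}$, namely $c(P_j)\,I_{P_j}$. Here $I_{P_j}$ is the per-copy channel FI from \S~\ref{sec:physical:single}, e.g.\ \eqref{eqn:I_e} with $n=t(P_j)$. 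I would state explicitly that $I_{P_j}$ means the FI of one copy, so that the $c(P_j)$ copies contribute additively.

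\emph{Step 2: chain rule for the reparametrization.} Since $\eta_{P_j}=\prod_i \eta_i^{A_{j,i}}$, we have $\partial \eta_{P_j}/\partial \eta_i = A_{j,i}\,\eta_{P_j}/\eta_i$. Hence the Jacobian is $J = D_\theta A D_\eta^{-1}$, where $D_\theta=\operatorname{diag}(\eta_{P_j})$ and $D_\eta=\operatorname{diag}(\eta_i)$. The standard transformation rule for Fisher information then gives
\[
\mathcal{I}=J^T \Lambda J = D_\eta^{-1} A^T \bigl(D_\theta \Lambda D_\theta\bigr) A D_\eta^{-1},
\]
with $\Lambda=\operatorname{diag}(c(P_j)I_{P_j})$. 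I would verify this directly from \eqref{eqn:FIM}: both $\partial\bm{\mu}/\partial\eta_i$ and $\partial\bm{\Sigma}/\partial\eta_i$ are sums over $j$ of the $\theta_j$-derivatives times $\partial\theta_j/\partial\eta_i$, and each block depends on a single $\theta_j$. Write $W=D_\theta\Lambda D_\theta=\operatorname{diag}(c(P_j)\eta_{P_j}^2 I_{P_j})$.

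\emph{Step 3: read off the two metrics.} Taking determinants of $\mathcal{I}=D_\eta^{-1}A^TWAD_\eta^{-1}$ gives $\det(\mathcal{I})=\prod_i\eta_i^{-2}\cdot\det(A)^2\cdot\prod_j W_{jj}$, which is \eqref{eqn:det_FIM}. Inverting gives $\mathcal{I}^{-1}=D_\eta A^{-1}W^{-1}A^{-T}D_\eta$. Its $(i,i)$ entry is $\eta_i^2\sum_j (A^{-1})_{i,j}^2/W_{jj}$, and summing over $i$ yields \eqref{eqn:tr_inverse}.

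The main obstacle is Step 1 together with the chain-rule check. One has to argue carefully that, for each physical implementation, the per-probe Gaussian depends on $\bm{\eta}$ only through the scalar $\eta_{P_j}$. For the squeezed and entangled cases this holds because $\bm{\mu}$ and $\bm{\Sigma}$ are functions of $\eta_{P_j}$, as in \eqref{eqn:mu_ent} and \eqref{eqn:Sigma_ent}. One also has to check that the trace term in \eqref{eqn:FIM} has no cross-block contributions, so that it transforms bilinearly under the chain rule. Everything after that is linear algebra.
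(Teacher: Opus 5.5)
Your proposal is correct and essentially matches the paper's proof. The paper also uses the block-diagonal (uncorrelated-probe) structure to write $\mathcal{I}_{i,j}=\sum_{P} c(P)\,\frac{\partial \eta_P}{\partial \eta_i}\frac{\partial \eta_P}{\partial \eta_j}\, I_P$ with $\frac{\partial \eta_P}{\partial \eta_i}=A_{P,e_i}\eta_P/\eta_i$, reaches the same factorization $\mathcal{I}=\bm{D}_{\bm{\eta}}^{-1}A^T\bm{D}_{\mathscr{P}}A\,\bm{D}_{\bm{\eta}}^{-1}$, and reads off the determinant and the trace of the inverse exactly as you do; your $\mathcal{I}=J^T\Lambda J$ is simply the matrix form of that entrywise chain-rule computation.
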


First we clarify that, since $n$ link transmissivities are identifiable by $n$ $4$-tuple probes, $P_1, P_2, \dots, P_n$ must be distinct.
Let $\mathcal{P} = \{P_1, P_2, \dots, P_n\}$ be the set of graph-theoretic probes.
We have that the measurement matrix $A$, defined solely based on $\mathcal{P}$, must be a non-singular matrix, $\det(A)$ is therefore well-defined.

A notable feature of Lemma~\ref{lem:det_FIM_graph} is that it relates the two metrics of interest in the general setting to the FI of the channel case, while decoupling the physical implementation of the probes from their graph-theoretic properties.
This feature makes it especially convenient when comparing different physical implementations given the same set of graph-theoretic probes.
For example, given two sets of probes $\mathscr{P}$ and $\mathscr{P}'$, both with the same set of graph-theoretic probes $\mathcal{P} = \{P_1, P_2, \dots, P_n\}$, $\mathscr{P}$ and $\mathscr{P}'$ only differ in the $\verb|impl|(P_i), t(P_i)$ and $c(P_i)$ on some subset of the $P_i$.
Since the dependence on $t(P_i)$ is implicit in the channel FI $I_{P_i}$, when switching from $\mathscr{P}$ to $\mathscr{P}'$, only $c(P_i)$ and $I_{P_i}$ may change in~\eqref{eqn:det_FIM} and~\eqref{eqn:tr_inverse}.
Using the prime $(')$ to label the quantities induced by $\mathscr{P}'$, we have that the ratio $\frac{\det(\mathcal{I'})}{\det(\mathcal{I'})}$ completely depends on the physical implementations of the probes, while $\Tr(\mathcal{I'}^{-1}) - \Tr(\mathcal{I}^{-1})$ is a weighted sum of $\frac{1}{c(P_i)' I'_{P_i}} - \frac{1}{c(P_i) I_{P_i}}$, with weights solely determined by the graph-theoretic properties of $\mathcal{P}$:
\begin{align}
\frac{\det(\mathcal{I'})}{\det(\mathcal{I'})}
& = \prod_{i=1}^n \frac{c(P_i)' \cdot I'_{P_i}}{c(P_i) \cdot I_{P_i}}, \label{eqn:det_improvement} \\
\Tr(\mathcal{I'}^{-1}) - \Tr(\mathcal{I}^{-1})
& = \sum_{i=1}^n \sum_{j=1}^n \frac{\eta_i^2 (A^{-1})_{i,j}^2}{\eta_{P_j}^2}\left( \frac{1}{c(P_i)' I'_{P_i}} - \frac{1}{c(P_i) I_{P_i}} \right) \label{eqn:tr_improvement}.
\end{align}
In particular, if $\mathscr{P}$ is implemented by classical probes, and $\mathscr{P}'$ is quantum,~\eqref{eqn:det_improvement} and~\eqref{eqn:tr_improvement} together with Fact~\ref{fact:single_ch_q_vs_c} characterize the quantum improvement in a general network.

In fact, Lemma~\ref{lem:det_FIM_graph} could also apply in other contexts.
While we have not focused on finding the optimal probes under resource constraints,~\eqref{eqn:det_FIM} and~\eqref{eqn:tr_inverse} could serve as objective functions should such optimization problem arises.
When only local changes are made to a small subset of the probes, many factors in these expressions remain unchanged, making them efficient to evaluate.
A subtle issue, however, is that evaluating~\eqref{eqn:det_FIM} and~\eqref{eqn:tr_inverse} requires prior knowledge of $\eta_1, \eta_2, \dots, \eta_n$, the very parameters we seek to estimate in teh tomography problem.
A potential solution is to first obtain rough estimates of these link transmissivities, and use the estimates in the objective function.
We leave this optimization problem for future work.

\subsection{Proof of Lemma~\ref{lem:det_FIM_graph}} \label{sec:combined:proof}
To prove Lemma~\ref{lem:det_FIM_graph}, it is important to understand the structure of the covariance matrix $\bm{\Sigma}$ induced by $\mathscr{P}$.
Notice that for any pair of distinct $P_1$ and $P_2$ from $\mathcal{P}$, regardless of whether $P_1$ and $P_2$ share edges, observations from $(P_1, \verb|impl|(P_1), t(P_1), m(P_1))$ and $(P_2, \verb|impl|(P_2), t(P_2), m(P_2))$ are always uncorrrelated.
This implies that $\bm{\Sigma}$ is block diagonal, where a probe $(P, \verb|impl|(P), t(P), m(P))$ occupies $m(P)$ number of blocks of size $t(P) \times t(P)$ each (a fact glossed over in the proof sketch of Claim~\ref{clm:disjoint_probes}).
In particular, a probe $(P, \verb|impl|(P), t(P), m(P))$ implemented by coherent or squeezed states, i.e., $\verb|impl|(P) = `\verb|coherent|\textrm'$ or $`\verb|squeezed|\textrm'$, contributes $m(P)$ elements on the the diagonal of $\bm{\Sigma}$.
It is also implicit from \S~\ref{sec:physical:single} that $\bm{\Sigma}^{-1}$ preserves the same block diagonal structure.
Denote $\mathcal{I}_{i,j}^{(P)}$ as the contribution to $\mathcal{I}_{i,j}$ from probe $(P, \verb|impl|(P), t(P), m(P))$.
Since $\bm{\Sigma}$ and $\bm{\Sigma}^{-1}$ are both block diagonal, \eqref{eqn:FIM} implies
\begin{equation} \label{eqn:I_ij_sum_contrib}
\mathcal{I}_{i,j}
= \sum_{P \in \mathcal{P}} m(P) \cdot \mathcal{I}_{i,j}^{(P)}.
\end{equation}
A direct application of~\eqref{eqn:FIM} to the channel corresponding to $(P, \verb|impl|(P), t(P), m(P))$ also gives the channel FI
\begin{equation} \label{eqn:I_P}
I_P
= \frac{\partial \bm{\mu}_P^T}{\partial \eta_P} \bm{\Sigma_P}^{-1} \frac{\partial \bm{\mu}_P}{\partial \eta_P}  + \frac{1}{2}\Tr \left(\bm{\Sigma_P}^{-1} \frac{\partial \bm{\Sigma}_P}{\partial \eta_P} \bm{\Sigma}_P^{-1}\frac{\partial \bm{\Sigma_P}}{\partial \eta_P}\right).
\end{equation}
Following~\eqref{eqn:I_P} and the chain rule,
\begin{align}
\mathcal{I}_{i,j}^{(P)}
& = \frac{\partial \bm{\mu}_P^T}{\partial \eta_i} \bm{\Sigma_P}^{-1} \frac{\partial \bm{\mu}_P}{\partial \eta_j}  + \frac{1}{2}\Tr \left(\bm{\Sigma_P}^{-1} \frac{\partial \bm{\Sigma}_P}{\partial \eta_i} \bm{\Sigma}_P^{-1}\frac{\partial \bm{\Sigma_P}}{\partial \eta_j}\right) \nonumber \\
& = \frac{\partial \bm{\mu}_P^T}{\partial \eta_P} \cdot \frac{\partial \eta_P}{\partial \eta_i} \bm{\Sigma_P}^{-1} \frac{\partial \bm{\mu}_P}{\partial \eta_P} \cdot \frac{\partial \eta_P}{\partial \eta_j} 
+ \frac{1}{2}\Tr \left(
\bm{\Sigma_P}^{-1} \frac{\partial \bm{\Sigma}_P}{\partial \eta_P} \cdot  \frac{\partial \eta_P}{\partial \eta_i} 
\bm{\Sigma}_P^{-1} \frac{\partial \bm{\Sigma_P}}{\partial \eta_P} \cdot  \frac{\partial \eta_P}{\partial \eta_j} \right) \nonumber \\
& = \frac{\partial \eta_P}{\partial \eta_i} \cdot \frac{\partial \eta_P}{\partial \eta_j} \cdot I_P. \label{eqn:I_ij_P_initial}
\end{align}
Here in $\mathcal{I}_{i,j}^{(P)}$, we can already observe a separation between the graph-theoretic properties of $(P, \verb|impl|(P), t(P), m(P))$ and its physical implementation --- the factor $\frac{\partial \eta_P}{\partial \eta_i} \cdot \frac{\partial \eta_P}{\partial \eta_j}$ only depends on how the probe is routed in the network.

Next we express $\frac{\partial \eta_P}{\partial \eta_i}$ in terms of the measurement matrix $A$.
Recall that each row in the $A$ corresponds to a probe $P$, and each column in $A$ corresponds to an edge $e$.
With a slight abuse of notation, we write $A_{P,e}$ as the entry in $A$ corresponding to probe $(P, \verb|impl|(P), t(P), m(P))$ and edge $e$, and $A_{P,e}$ specifies the multiplicity of $e$ in the multiset $P$.
Then we have $\eta_P = \prod_{e \in \supp(P)} \eta_e^{A_{P,e}}$, and $\ln{\eta_P} = \sum_{e \in \supp(P)} A_{P,e} \ln{\eta_e}$.
Taking the partial derivative with respect to $\eta_i$ (the transmissivity of edge $e_i$) on both sides, we have 
$\frac{\partial \ln{\eta_P}}{\partial \eta_i} 
A_{P, e_i} \cdot \frac{\partial \ln{\eta_i}}{\partial \eta_i} 
= \frac{A_{P, e_i}}{\eta_i}$.
Applying the chain rule to the LHS leads to 
$\frac{\partial \ln{\eta_P}}{\partial \eta_P} \cdot \frac{\partial \eta_P}{\partial \eta_i} 
= \frac{1}{\eta_P} \cdot \frac{\partial \eta_P}{\partial \eta_i} 
= \frac{A_{P, e_i}}{\eta_i}$, which gives
$\frac{\partial \eta_P}{\partial \eta_i} 
= A_{P, e_i} \cdot \frac{\eta_P}{\eta_i}$.
Therefore, we can rewrite~\eqref{eqn:I_ij_P_initial} as
$\mathcal{I}_{i,j}^{(P)}
= A_{P, e_i} A_{P, e_j} \cdot \frac{\eta_P^2}{\eta_i \eta_j} I_P$.
Then by~\eqref{eqn:I_ij_sum_contrib},
\begin{equation} \label{eqn:I_ij}
\mathcal{I}_{i,j}
= \sum_{P \in \mathcal{P}} m(P) \cdot A_{P, e_i} A_{P, e_j} \cdot \frac{\eta_P^2}{\eta_i \eta_j} I_P.
\end{equation}

Let $\bm{D}_{\bm{\eta}} = \diag(\eta_1, \eta_2, \dots, \eta_n)$ and $\bm{D}_{\mathscr{P}} = \diag(c(P_1)\cdot \eta_{P_1}^2 I_{P_1}, c(P_2)\cdot \eta_{P_2}^2 I_{P_2}, \dots, c(P_n)\cdot \eta_{P_n}^2 I_{P_n})$.
One can check that~\eqref{eqn:I_ij} is in fact consistent with $\mathcal{I}$ being the following matrix product:
\begin{equation} \label{eqn:FIM_mat_prod}
\mathcal{I}
= \bm{D}_{\bm{\eta}}^{-1} A^T \bm{D}_{\mathscr{P}} A \bm{D}_{\bm{\eta}}^{-1}.
\end{equation}
Therefore, \eqref{eqn:det_FIM} follows from the fact that $A$, $\bm{D}_{\bm{\eta}}$ and $\bm{D}_{\mathscr{P}}$ are all non-singular matrices.

For \eqref{eqn:tr_inverse}, by~\eqref{eqn:FIM_mat_prod} and $(A^T)^{-1} = (A^{-1})^T$,
\begin{equation*}
\mathcal{I}^{-1} 
= \bm{D}_{\bm{\eta}} A^{-1} \bm{D}_{\mathscr{P}}^{-1} (A^{-1})^T \bm{D}_{\bm{\eta}}. 
\end{equation*}
Using the cyclic property of the trace,
\begin{align*}
\Tr(\mathcal{I}^{-1})
& = \Tr(\bm{D}_{\bm{\eta}}^2 A^{-1} \bm{D}_{\mathscr{P}}^{-1} (A^{-1})^T) \\
& = \sum_{i=1}^n \eta_i^2 (A^{-1} \bm{D}_{\mathscr{P}}^{-1} (A^{-1})^T)_{i,i} \\
& = \sum_{i=1}^n \eta_i^2 \sum_{j=1}^n (A^{-1})_{i,j} \cdot \frac{1}{c(P_j)\cdot \eta_{P_j}^2 I_{P_j}} \cdot (A^{-1})^T_{i,j} \\
& = \sum_{i=1}^n \eta_i^2 \sum_{j=1}^n \frac{(A^{-1})_{i,j}^2}{c(P_j)\cdot \eta_{P_j}^2 I_{P_j}}.
\end{align*}


%% file: related.tex
\section{Related Work} \label{sec:related}

\paragraph{The probes.}
The quantum and classical probes in this work have previously been used in the context of detecting drops in link transmissivity within a channel~\cite{guha2025quantum}, and localizing transmission loss change in a general network~\cite{zheng2025quantum}.
In both cases, the quantum speedup essentially comes from the fact that quantum probes induce a larger Kullback–Leibler divergence from the pre-change to the post-change distributions.
Here we use these probes to estimate link transmissivities, and the quantum improvement is quantified by the determinannt of the FIM, and the trace of its inverse.

\paragraph{Sharing entanglement.}
When we go beyond a single channel and look at a general network, an interesting question that arises is whether sharing entanglement across probes traversing different paths is beneficial.
While we do not yet have a rigorous argument, the evidence in \S~\ref{sec:physical:multi} and Appendix~\ref{apd:2_channel} points to a negative answer.
This is in fact unsurprising.
In multiparameter quantum metrology, it has been shown that entanglement does not provide a systematic advantage over displaced squeezed states~\cite{gessner2020multiparameter,nichols2018multiparameter}.
It remains open whether we can leverage techniques from quantum metrology to formally prove such no-go results in network tomography.

\paragraph{Probe construction algorithms.}
A large body of work has studied how to route the probes in a network~\cite{zheng2025quantum,ma2013efficient,ahuja2011srlg,zhao2009towards}, however, to the best of our knowledge, none has focused on maximizing the number of information orthogonal sets of unknown parameters.
Algorithm~\ref{algo:FindProbe} bears the closest resemblance to the probe construction algorithm in~\cite{zheng2025quantum}.
While we have not explicitly tried to minimize the length of the longest probe (a central goal in~\cite{zheng2025quantum}), it is not hard to see that Algorithm~\ref{algo:FindProbe} achieves this goal approximately: the longest probe it generates exceeds the minimum possible by at most $1$ if the underlying graph is unweighted.

\paragraph{The metrics.}
While we think of each link in the network as a pure-loss bosonic channel, prior works in quantum network tomography have modeled the links as bit-flip~\cite{de2024quantum}, depolarizing~\cite{de2024quantum} and Pauli channels~\cite{wang2025quantum}.
In both~\cite{wang2025quantum} and~\cite{de2024quantum}, the trace of the inverse quantum FIM has been used to evaluate the performance of the estimation.
The two metrics in this work, the determinant of the FIM, and the trace of its inverse, have been studied in finding the optimal probe allocation in classical network tomography~\cite{he2015fisher}.
Although the packet loss tomography in~\cite{he2015fisher} estimates upper-layer link success rates, whereas we focus on physical-layer link transmissivity, and the observations from the probes follow entirely different distributions, we nevertheless observe a striking similarity between our FIM~\eqref{eqn:FIM_mat_prod} and (17) in~\cite{he2015fisher}.

%% file: conclusion.tex
\section{Conclusion} \label{sec:conclusion}

In this paper, we consider the problem of estimating link transmissivities in optical networks using quantum probes.
To decide how to route the probes in a network, we propose a probe construction algorithm that not only guarantees identifiability, but also maximizes the number of information orthogonal sets of link transmissivities.
To measure the performance of a given set of probes, we derive closed-form expressions for the determinant of the FIM, and the trace of its inverse.
In particular, both metrics can be used to quantify the quantum improvement in this estimation problem.
As noted at the end of \S~\ref{sec:combined:lemma}, both metrics could potentially serve as objective functions in the associated optimization problem, namely, finding the optimal set of probes under resource constraints. 
We leave this for future work.

\section*{Acknowledgment}

The initial stage of this research was supported by the DARPA Quantum Augmented Networking (QuANET) Program under contract number HR001124C0405.

%% file: appendix.tex
\section{Sharing entanglement across two probes that share links} \label{apd:2_channel}

\begin{figure}
\centering
\begin{subfigure}[b]{0.48\textwidth}
    \centering
    \includegraphics[width=\textwidth]{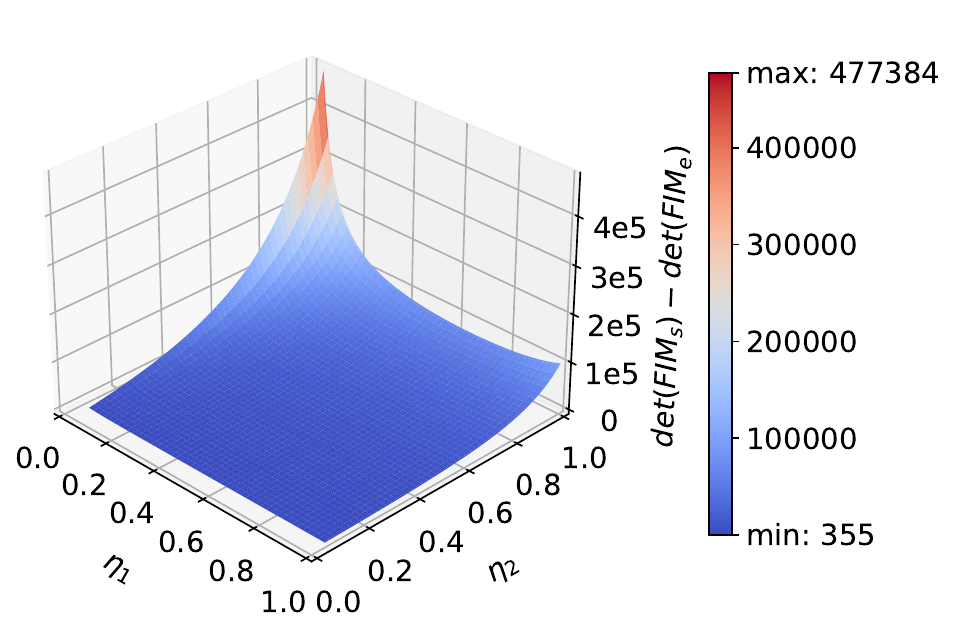}
    \caption{$\det(\mathcal{I}_{\mathrm{s,s}}) - \det(\mathcal{I}_{\mathrm{e,s}})$ wrt $\eta_1$ and $\eta_2$.}
    \label{fig:2_channel:det}
\end{subfigure}
\hfill
\begin{subfigure}[b]{0.48\textwidth}
    \centering
    \includegraphics[width=\textwidth]{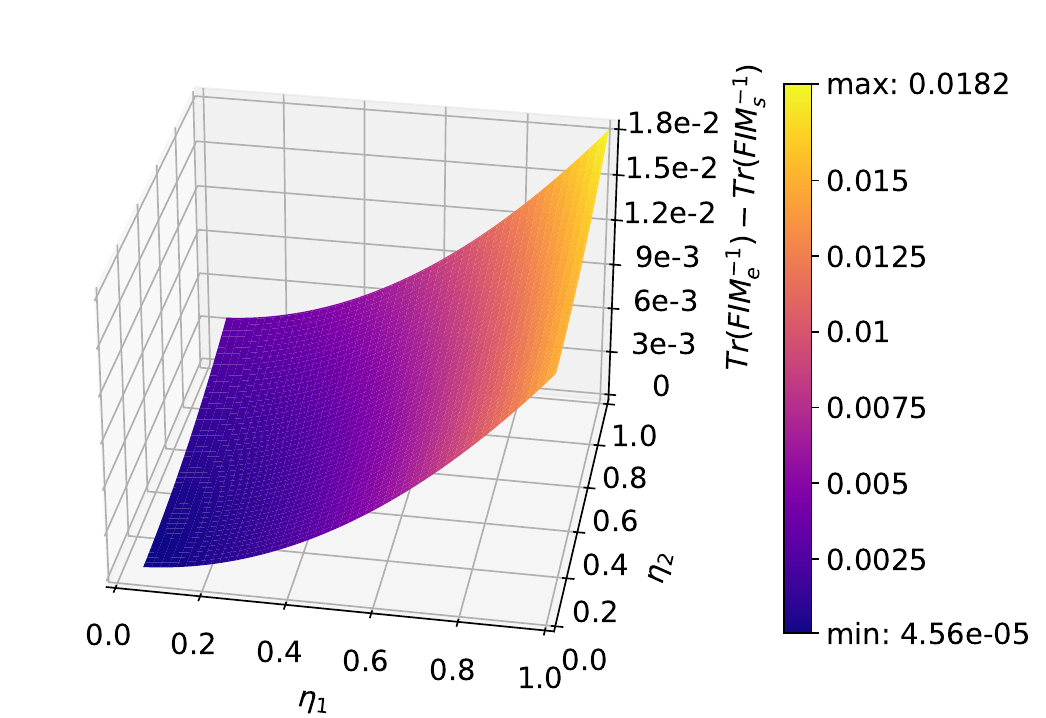}
    \caption{$\Tr(\mathcal{I}_{\mathrm{e,s}}^{-1}) - \Tr(\mathcal{I}_{\mathrm{s,s}}^{-1})$ wrt $\eta_1$ and $\eta_2$.}
    \label{fig:2_channel:Tr}
\end{subfigure}
\caption{Differences in the two metrics for the two-channel setup.}
\end{figure}

Consider two probes that are set through two channels with transmissivities $\eta_1\eta_2$ and $\eta_2$, respectively.
Fix the classical and quantum energy per-pulse, $N$ and $N_a$, we compare two cases:
\begin{enumerate*}[(i)]
\item each probe is a displaced squeezed state; and 
\item the two probes are entangled.
\end{enumerate*}
We present numerical results demonstrating that, similar to the independent channel configuration in \S~\ref{sec:physical:multi}, with respect to \emph{both} the determinant of the FIM and the trace of its inverse, entangled probes perform worse than the ones implemented by independent squeezing.

We use the subscript `$\cdot\mathrm{,s}$' to distinguish this setup, where the channel transmissivities have shared factors.
Recall that $c_n$~\eqref{eqn:c_n} hides the quantum energy $N_a$.
When the two probes are implemented by squeezed states, we have
\begin{align*}
& \bm{\mu}_{\mathrm{s,s}} 
= \sqrt{N} \cdot 
\begin{pmatrix}
\sqrt{\eta_1 \eta_2} \\
\sqrt{\eta_2}
\end{pmatrix}, \\
& \bm{\Sigma}_{\mathrm{s,s}} 
= 
\begin{pmatrix}
\frac{1}{4} - \frac{c_1 \eta_1 \eta_2}{4} & 0 \\
0 & \frac{1}{4} - \frac{c_1\eta_2}{4}
\end{pmatrix},
\end{align*}
\begin{align}
\det(\mathcal{I}_{\mathrm{s,s}}) 
& = \frac{1}{4\eta_1} 
\cdot \frac{2N(1-c_1\eta_1 \eta_2)+c_1^2\eta_1 \eta_2}{(1- c_1 \eta_1 \eta_2)^2} 
\cdot \frac{2N(1-c_1\eta_2)+c_1^2\eta_2}{(1- c_1 \eta_2)^2},  \nonumber \\
\Tr(\mathcal{I}_{\mathrm{s,s}}^{-1})
& = \frac{2}{\eta_2} \cdot
\left( 
\frac{\eta_1(1- c_1 \eta_1 \eta_2)^2}{2N(1-c_1\eta_1 \eta_2)+c_1^2\eta_1 \eta_2}
+ \frac{(\eta_1^2 + \eta_2^2)(1- c_1 \eta_2)^2}{2N(1-c_1\eta_2)+c_1^2\eta_2}
\right). \nonumber
\end{align}
If the two probes are entangled,
\begin{align*}
& \bm{\mu}_{\mathrm{e,s}} 
= \sqrt{N} \cdot 
\begin{pmatrix}
\sqrt{\eta_1 \eta_2} \\
\sqrt{\eta_2}
\end{pmatrix}, \\
& \bm{\Sigma}_{\mathrm{s,s}} 
= 
\begin{pmatrix}
\frac{1}{4} - \frac{c_2 \eta_1 \eta_2}{8} & -\frac{c_2\sqrt{\eta_1} \eta_2}{8} \\
-\frac{c_2\sqrt{\eta_1} \eta_2}{8} & \frac{1}{4} - \frac{c_2\eta_2}{8}
\end{pmatrix},
\end{align*}
\begin{align}
\det(\mathcal{I}_{\mathrm{e,s}}) 
& = \frac{32N^2 (2-c_2\eta_2(1+\eta_1))^2 + 12Nc_2^2 \eta_2(1+\eta_1)(2-c_2\eta_2(1+\eta_1)) + c_2^4\eta_2^2(1+\eta_1)^2}{16\eta_1(2-c_2\eta_2(1+\eta_1))^3},  \nonumber \\
\Tr(\mathcal{I}_{\mathrm{s,s}}^{-1})
& = \frac{2(2-c_2 \eta_2(1+\eta_1))}{(1+\eta_1)\eta_2} \cdot
\left( 
\frac{4\eta_1 ((1+\eta_1)^2 + \eta_2^2)}{16N-c_2\eta_2(8N-c_2)(1+\eta_1)}
+ \frac{\eta_2^2(2-c_2\eta_2(1+\eta_1))}{8N-c_2\eta_2(4N-c_2)(1+\eta_1)}
\right). \nonumber
\end{align}

We compute $\det(\mathcal{I}_{\mathrm{s,s}}) - \det(\mathcal{I}_{\mathrm{e,s}})$ (Figure~\ref{fig:2_channel:det}) and $\Tr(\mathcal{I}_{\mathrm{e,s}}^{-1}) - \Tr(\mathcal{I}_{\mathrm{s,s}}^{-1})$ (Figure~\ref{fig:2_channel:Tr}) as $\eta_1, \eta_2$ range from $0$ to $1$.
Numerical evidence suggests that, we indeed have $\det(\mathcal{I}_{\mathrm{s,s}}) > \det(\mathcal{I}_{\mathrm{e,s}})$ and $\Tr(\mathcal{I}_{\mathrm{s,s}}^{-1}) < \Tr(\mathcal{I}_{\mathrm{e,s}}^{-1})$ for all $\eta_1, \eta_2 \in (0,1]$.

\section{The structure of $\mathcal{I}_{\mathrm{e,d}}$ (proof of Claim~\ref{clm:I_{e,d}})} \label{apd:proof_I_{e,d}}

To derive a closed-form for $\mathcal{I}_{\mathrm{e,d}}$, we mechanically derive each element that appears in~\eqref{eqn:FIM} for $(\mathcal{I}_{\mathrm{e,d}})_{i,j}$.
First denote $\bm{e}_i$ as the $i$-th unit vector, then $\frac{\partial \bm{\mu}_{\mathrm{e,d}}}{\partial \eta_i} = \frac{\sqrt{N}}{2\sqrt{\eta_i}} \bm{e}_i$ by~\eqref{eqn:mu_ent_split}.
Because $\bm{\Sigma}_{\mathrm{e,d}}$~\eqref{eqn:Sigma_ent_split} is a rank-$1$ approximation of $\frac{1}{4} \bm{I}$, the Sherman-Morrison formula applies, and $\bm{\Sigma}_{\mathrm{e,d}}^{-1} = 4 \bm{I} + \frac{4c_n}{n-c_n S_{\bm{\eta}}}\bm{v} \bm{v}^T$. (Recall that $\bm{v} = (\sqrt{\eta_1}, \sqrt{\eta_2}, \dots, \sqrt{\eta_n})$.)
The first term in~\eqref{eqn:FIM} hence becomes
\begin{equation} \label{eqn:I_{e,d}_{i,j}_term1}
\frac{\partial \bm{\mu}_{\mathrm{e,d}}^T}{\partial \eta_i} \cdot \bm{\Sigma}_{\mathrm{e,d}}^{-1} \cdot \frac{\partial \bm{\mu}_{\mathrm{e,d}}}{\partial \eta_j}
= \frac{N}{4\sqrt{\eta_i \eta_j}} \cdot \left(\bm{\Sigma}_{\mathrm{e,d}}^{-1} \right)_{i,j}
= \frac{c_nN}{n-c_nS_{\bm{\eta}}} + N \cdot \frac{\mathbbm{1}[i=j]}{\sqrt{\eta_i \eta_j}} \cdot .
\end{equation}
Therefore, the contribution to $\mathcal{I}_{\mathrm{e,d}}$ from the first term in~\eqref{eqn:FIM} is 
\begin{equation*}
N \cdot \operatorname{diag}(\frac{1}{\eta_1}, \frac{1}{\eta_2}, \dots, \frac{1}{\eta_n}) + \frac{c_nN}{n-c_nS_{\bm{\eta}}} \cdot \bm{u} \bm{u}^T.
\end{equation*}
For the second term in~\eqref{eqn:FIM}, we have
\begin{equation*}
\frac{\partial \bm{\Sigma}_{\mathrm{e,d}}}{\partial \eta_i}
= -\frac{c_n}{4n} \cdot \frac{\partial \bm{v} \bm{v}^T}{\partial \eta_i}
= -\frac{c_n}{4n} \left( \frac{\partial \bm{v}}{\partial \eta_i} \cdot \bm{v}^T + \bm{v} \cdot \frac{\partial \bm{v}^T}{\partial \eta_i} \right)
= -\frac{c_n}{8n\sqrt{\eta_i}}(\bm{e}_i \bm{v}^T + \bm{v} \bm{e}_i^T).
\end{equation*}
For convenience, let $b = \frac{4c_n}{n-c_n S_{\bm{\eta}}}$.
Since $\bm{v}^T \bm{e}_i = \sqrt{\eta_i}$, and $\bm{v}^T \bm{v} = S_{\bm{\eta}}$,
\begin{align*}
\bm{\Sigma}_{\mathrm{e,d}}^{-1} \cdot \frac{\partial \bm{\Sigma}_{\mathrm{e,d}}}{\partial \eta_i}
& = -\frac{c_n}{8n\sqrt{\eta_i}} (4 \bm{I} + b\bm{v} \bm{v}^T)(\bm{e}_i \bm{v}^T + \bm{v} \bm{e}_i^T) \\
& = -\frac{c_n}{2n\sqrt{\eta_i}} \bm{e}_i \bm{v}^T -\frac{c_n(4+b S_{\bm{\eta}})}{8n\sqrt{\eta_i}} \bm{v} \bm{e}_i^T -\frac{bc_n}{8n} \bm{v} \bm{v}^T,
\end{align*}
\begin{align*}
\bm{\Sigma}_{\mathrm{e,d}}^{-1} \cdot \frac{\partial \bm{\Sigma}_{\mathrm{e,d}}}{\partial \eta_i} \cdot \bm{\Sigma}_{\mathrm{e,d}}^{-1} \cdot \frac{\partial \bm{\Sigma}_{\mathrm{e,d}}}{\partial \eta_j}
& = \left( 
\underbrace{\frac{c_n}{2n\sqrt{\eta_i}} \bm{e}_i \bm{v}^T}_{(\blacksquare)}
+ \underbrace{\frac{c_n(4+b S_{\bm{\eta}})}{8n\sqrt{\eta_i}} \bm{v} \bm{e}_i^T}_{(\blacktriangle)}
+ \underbrace{\frac{bc_n}{8n} \bm{v} \bm{v}^T}_{(\blacklozenge)} \right) \\
& \qquad \qquad \cdot 
\left( \underbrace{\frac{c_n}{2n\sqrt{\eta_j}} \bm{e}_j \bm{v}^T}_{(\square)} 
+ \underbrace{\frac{c_n(4+b S_{\bm{\eta}})}{8n\sqrt{\eta_j}} \bm{v} \bm{e}_j^T}_{(\vartriangle)} 
+ \underbrace{\frac{bc_n}{8n} \bm{v} \bm{v}^T}_{(\lozenge )} \right).
\end{align*}
Let $\mathbbm{1}[i=j]$ denote the indicator function, such that $\mathbbm{1}[i=j] = 1$ if and only if the condition $i=j$ holds, and $\mathbbm{1}[i=j] = 0$ otherwise. 
By the cyclic property of the trace, we obtain the following product terms,
\begin{align*}
& \Tr((\blacksquare)(\square))
= \frac{c_n^2}{4n^2\sqrt{\eta_i \eta_j}} \cdot \Tr(\bm{e}_i \bm{v}^T \bm{e}_j \bm{v}^T)
= \frac{c_n^2}{4n^2\sqrt{\eta_i \eta_j}} \cdot \Tr(\bm{v}^T \bm{e}_j \bm{v}^T \bm{e}_i)
= \frac{c_n^2}{4n^2}, \\
& \Tr((\blacksquare)(\vartriangle))
= \frac{c_n^2(4+b S_{\bm{\eta}})}{16n^2\sqrt{\eta_i\eta_j}} \cdot \Tr(\bm{e}_i \bm{v}^T \bm{v} \bm{e}_j^T) = \frac{c_n^2S_{\bm{\eta}}(4+b S_{\bm{\eta}})}{16n^2\sqrt{\eta_i\eta_j}} \cdot \Tr(\bm{e}_i \bm{e}_j^T)
= \frac{c_n^2S_{\bm{\eta}}(4+b S_{\bm{\eta}})}{16n^2\sqrt{\eta_i\eta_j}} \cdot \mathbbm{1}[i=j], \\
& \Tr((\blacksquare)(\lozenge))
= \frac{bc_n^2}{16n^2\sqrt{\eta_i}} \cdot \Tr(\bm{e}_i \bm{v}^T \bm{v} \bm{v}^T)
= \frac{bc_n^2 S_{\bm{\eta}}}{16n^2\sqrt{\eta_i}} \cdot \Tr(\bm{e}_i \bm{v}^T)
= \frac{bc_n^2 S_{\bm{\eta}}}{16n^2\sqrt{\eta_i}} \cdot \bm{v}^T \bm{e}_i
= \frac{bc_n^2 S_{\bm{\eta}}}{16n^2},\\
& \Tr((\blacktriangle)(\square))
= \frac{c_n^2(4+b S_{\bm{\eta}})}{16n^2\sqrt{\eta_i\eta_j}} \cdot \Tr(\bm{v} \bm{e}_i^T \bm{e}_j \bm{v}^T) 
= \frac{c_n^2(4+b S_{\bm{\eta}})}{16n^2\sqrt{\eta_i\eta_j}} \cdot \mathbbm{1}[i=j] \cdot \bm{v}^T \bm{v}
= \frac{c_n^2 S_{\bm{\eta}} (4+b S_{\bm{\eta}})}{16n^2\sqrt{\eta_i\eta_j}} \cdot \mathbbm{1}[i=j], \\
& \Tr((\blacktriangle)(\vartriangle))
= \frac{c_n^2(4+b S_{\bm{\eta}})^2}{64n^2\sqrt{\eta_i \eta_j}} \cdot \Tr(\bm{v} \bm{e}_i^T \bm{v} \bm{e}_j^T)
= \frac{c_n^2(4+b S_{\bm{\eta}})^2}{64n^2\sqrt{\eta_j}} \cdot \Tr(\bm{e}_j^T \bm{v}) 
= \frac{c_n^2(4+b S_{\bm{\eta}})^2}{64n^2}, \\
& \Tr((\blacktriangle)(\lozenge))
= \frac{bc_n^2(4+b S_{\bm{\eta}})}{64n^2\sqrt{\eta_i}} \cdot \Tr(\bm{v} \bm{e}_i^T \bm{v} \bm{v}^T)
= \frac{bc_n^2(4+b S_{\bm{\eta}})}{64n^2} \cdot \Tr(\bm{v} \bm{v}^T)
= \frac{bc_n^2 S_{\bm{\eta}} (4+b S_{\bm{\eta}})}{64n^2}, \\
& \Tr((\blacklozenge)(\square))
= \frac{bc_n^2}{16n^2\sqrt{\eta_j}} \cdot \Tr(\bm{v} \bm{v}^T \bm{e}_j \bm{v}^T)
= \frac{bc_n^2}{16n^2} \cdot \Tr(\bm{v} \bm{v}^T)
= \frac{bc_n^2 S_{\bm{\eta}}}{16n^2}, \\
& \Tr((\blacklozenge)(\vartriangle))
= \frac{bc_n^2(4+b S_{\bm{\eta}})}{64n^2\sqrt{\eta_j}} \cdot \Tr(\bm{v} \bm{v}^T \bm{v} \bm{e}_j^T)
= \frac{bc_n^2 S_{\bm{\eta}} (4+b S_{\bm{\eta}})}{64n^2},\\
& \Tr((\blacklozenge)(\lozenge))
= \frac{b^2 c_n^2}{64n^2} \cdot \Tr(\bm{v} \bm{v}^T \bm{v} \bm{v}^T)
= \frac{b^2 c_n^2 S_{\bm{\eta}}^2}{64n^2}.
\end{align*}
Using the linearity of the trace, we get the second term in~\eqref{eqn:FIM} by summing up all of the   above,
\begin{equation} \label{eqn:I_{e,d}_{i,j}_term2}
\frac{1}{2} \Tr \left( \bm{\Sigma}_{\mathrm{e,d}}^{-1} \cdot \frac{\partial \bm{\Sigma}_{\mathrm{e,d}}}{\partial \eta_i} \cdot \bm{\Sigma}_{\mathrm{e,d}}^{-1} \cdot \frac{\partial \bm{\Sigma}_{\mathrm{e,d}}}{\partial \eta_j} \right)
= \frac{c_n^2(n+c_n S_{\bm{\eta}})}{4n(n-c_n S_{\bm{\eta}})^2}
+ \frac{c_n^2 S_{\bm{\eta}}}{4n(n-c_n S_{\bm{\eta}})} \cdot \frac{\mathbbm{1}[i=j]}{\sqrt{\eta_i \eta_j}}.
\end{equation}
The claim follows by combining \eqref{eqn:I_{e,d}_{i,j}_term1} and \eqref{eqn:I_{e,d}_{i,j}_term2}, which gives
\begin{equation*}
(\mathcal{I}_{\mathrm{e,d}})_{i,j}
= \frac{c_nN}{n-c_nS_{\bm{\eta}}} + \frac{c_n^2(n+c_n S_{\bm{\eta}})}{4n(n-c_n S_{\bm{\eta}})^2} 
+ \left( N + \frac{c_n^2 S_{\bm{\eta}}}{4n(n-c_n S_{\bm{\eta}})} \right) \cdot \frac{\mathbbm{1}[i=j]}{\sqrt{\eta_i \eta_j}}.
\end{equation*}